\newcommand{\keywords}[1]{\par\addvspace\baselineskip
\noindent\keywordname\enspace\ignorespaces#1}
\spnewtheorem{defn}[definition]{Definition}{\bfseries}{\upshape}
\spnewtheorem{notation}{Notation}{\bfseries}{\upshape}
\spnewtheorem{assertion}[proposition]{Assertion}{\bfseries}{\upshape}
\spnewtheorem{lem}[lemma]{Lemma}{\bfseries}{\upshape}
\spnewtheorem{thm}[theorem]{Theorem}{\bfseries}{\upshape}
\spnewtheorem{cor}[corollary]{Corollary}{\bfseries}{\upshape}
\newcommand{\Delt}{\mathrm\Delta}
\newcommand{\lanprinc}{Landauer's Principle}		
\newcommand{\eg}{\emph{e.g.}}
\begin{document}

\mainmatter

\title{Generalized Reversible Computing}

\titlerunning{Generalized Reversible Computing}

\author{Michael P. Frank%
\thanks{This work was supported by the Laboratory Directed Research 
and Development pro\-gram at Sandia National Laboratories and by the 
Advanced Simulation and Computing program under the U.S. Department 
of Energy's National Nuclear Security Administration (NNSA).  Sandia 
National Laboratories is a multimission laboratory managed and 
operated by National Technology and Engineering Solutions of Sandia, 
LLC., a wholly owned subsidiary of Honeywell International, Inc., for 
NNSA under con\-tract DE-NA0003525.  Approved for public release 
SAND2018-6889 O.}%
}
\authorrunning{Michael P.\ Frank}

\institute{Center for Computing Research, Sandia National Laboratories,\\
1515 Eubank SE, Mail Stop 1322, Albuquerque, NM 87123\\
\mailsa\\
\url{http://www.cs.sandia.gov}}

\toctitle{Generalized Reversible Computing}
\tocauthor{Michael P.\ Frank}
\maketitle

\thispagestyle{title}		

\begin{abstract}
{\lanprinc} that the loss of information from a computation 
corresponds to an increase in entropy can be expressed as a rigorous 
theorem of mathematical physics.  However, carefully examining its 
detailed formulation reveals that the traditional definition 
identifying logically reversible computational operations with 
bijective transformations of the full digital state space is actually 
not the most general characterization, at the logical level, of the 
complete set of classical computational operations that can be carried 
out physically with asymptotically zero energy dissipation.  To derive 
the correct set of necessary logical conditions for physical 
reversibility, we must take into account the effect of initial-state 
probabilities when applying the detailed form of the Principle.  The 
minimal logical-level requirement for the physical reversibility of 
deterministic computational operations turns out to be that only the 
\emph{subset} of initial states that are assigned nonzero probability 
in a given statistical operating context must be transformed one-to-one 
into final states.  Consequently, \emph{any} computational operation 
can be seen as \emph{conditionally reversible}, relative to any 
sufficiently-restrictive precondition on its initial state, and the 
minimum average dissipation required for any deterministic operation 
by {\lanprinc} asymptotically approaches zero in contexts where the 
probability of meeting any preselected one of its suitable 
preconditions approaches unity.  The concept of conditional 
reversibility facilitates much simpler designs for asymptotically 
thermodynamically reversible computational devices and circuits, 
compared to designs that are restricted to using only fully-bijective 
operations such as Fredkin/Toffoli type operations.  Thus, this more 
general framework for reversible computing provides a more effective 
theoretical foundation to use for the design of practical reversible 
computing hardware than does the more restrictive traditional model 
of reversible logic.  In this paper, we formally develop the 			
theoretical foundations of the generalized model, and briefly survey 
some of its applications.
\keywords{{\lanprinc}, foundations of reversible computing,
logical reversibility, reversible logic models, reversible hardware 
design, conditional reversibility, generalized reversible computing}
\end{abstract}

\section{Introduction}
\label{sec:intro}

As the end of the semiconductor roadmap approaches, there is today a
growing realization among industry leaders, researchers, funding 
agencies and investors that a transition to novel computing paradigms 
will be required in order for engineers to continue improving the 
energy efficiency (and thus, cost efficiency) of computing technology 
beyond the expected final CMOS node, when minimal transistor gate 
energies are expected to plateau at around the 40-80\,$k_{\rm B}T$ 
level\footnote{Where $k_{\rm B}$ is Boltzmann's constant, and $T$ is 
operating temperature.} ($\sim 1$-2 eV at room temperature), with 
typical total $CV^2$ node energies\footnote{Where $C$ is node 
capacitance, and $V$ is logic swing voltage.} plateauing at a much 
higher level of around 1-2 keV \cite{itrs}.  But, no matter at what 		
level exactly signal energies finally flatten out, to recover and 
reuse a fraction of the signal energy approaching 100\% is going to 
require carrying out logically reversible transformations of the local 
digital state, due to {\lanprinc} \cite{land}, which tells us that 			
performing computational operations that are irreversible (\emph{i.e.}, 
that lose information) necessarily results in an increase in entropy, 
and thus energy dissipation.  Therefore, it will be essential for the 
engineers who design and use future digital bit-devices to understand 
clearly and precisely what the meaning of and rationale for {\lanprinc} 
really are, and precisely what are the minimal requirements, at the 
logical level, for computational operations to be reversible---meaning, 
both not information-losing, and also capable of being physically 
carried out in an asymptotically thermodynamically reversible way.

However, the definition of logical reversibility that has been 
in wide\-spread use ever since Landauer's original paper is not, in 
fact, the most general defi\-ni\-tion of logical reversibility that is
consistent with the understanding that a lo\-gically reversible 
computational process can, in principle, be carried out via an 
(asymptotically) thermodynamically reversible physical process.  Thus, 
the traditional definition of logical reversibility in fact 
\emph{obscures} most of the space of technological possibilities, and 
has resulted in a substantial amount of debate and confusion 
(\emph{e.g.}, \cite{natcomm}) regarding the issue of whether logical 		
reversibility is really required for physical 
reversibility.\footnote{In the language of this paper, the authors of 
\cite{natcomm} empirically demonstrate that certain 							
conditionally-reversible operations, which we would refer to as 
\texttt{rOR}/\texttt{rNOR}, can avoid the Landauer limit, but without 
realizing that what they are doing is still a form of logically 
reversible computing, in the generalized sense developed here.}  
The debate can be resolved, and the confusion cleared up, by 
understanding that \emph{yes}, logical reversibility (if the meaning 
of that phrase is defined correctly) is \emph{indeed} required for 
physical reversibility, \emph{but}, the \emph{traditional} definition 
of what logical reversibility \emph{means} is actually \emph{not} the 
correct definition for this purpose; it is overly restrictive.  This 
can be rigorously proven by directly applying the detailed 
mathematical formulation of {\lanprinc}.  It turns out that a much 
larger set of computational operations is, in fact, reversible, at the 
logical level, than the traditional definition of ``logical 
reversibility'' acknowledges, and this larger set opens up many 
opportunities for device and circuit engineering that could not have 
been modeled at all by solely using the traditional definition of 
logical reversibility.  Nevertheless, those design opportunities have 
been noticed anyway by many of the engineers (\emph{e.g.}, 
\cite{likh,fred,drexler,yk}) who have developed concepts for hardware 		
implementations of reversible computing.  But, there remains today a 
widespread disconnect between the bulk of reversible computing theory, 
versus the engineering principles required for the design of efficient 
reversible hardware, a disconnect which can be bridged if theorists 
come to understand the necessity of exploring a more general 
theoretical model for reversible computing.  It is the goal of this 
paper to develop such a model from first principles, and show exactly 
why it is necessary and useful.

The structure of the rest of this paper is as follows.  In Section 2, 
we review some physical foundations, and then present a simple, 
general formulation of {\lanprinc} which follows from basic facts of 
mathematical physics and information theory.  This formulation both 
illustrates why {\lanprinc} itself is rigorously true (not debatable), 
and serves as a starting point for later analysis.  Then in Section 3, 
we reformulate the foundations of reversible computing theory in a way 
that develops a new theoretical framework that we call Generalized 
Reversible Computing (GRC), which includes the essential but 
usually-overlooked concept of \emph{conditional reversibility} 
\cite{ismvl}, which generalizes and subsumes the old definition of 			
(\emph{unconditionally}) logically reversible operations in a way that, 
critically, accounts for the statistical characteristics that apply in 
the context of specific computations.  In Section 4, we present 
several examples of conditionally-reversible operations that are 
useful as building blocks for reversible hardware design, and that are 
also straightforwardly physically implementable.  Many of these 
operations have already been implicitly utilized by the designers of 
various reversible hardware concepts (\eg, \cite{likh,fred,drexler,yk}), 	
despite the fact that most of the existing reversible computing theory 
literature is completely silent about them, as well as about all other 
operations in the largest, most diverse class of reversible operations, 
those that are not also unconditionally reversible.  Section 5 briefly 
discusses why it is GRC, and not the traditional 
unconditionally-reversible model of reversible computing, that is the 
appropriate model for understanding asymptotically thermodynamically 
reversible hardware such as adiabatic switching circuits.  Section 6 
contrasts GRC's concept of conditional reversibility with existing 
concepts of conditions for correctness of reversible computations that 
have been explored in the literature.  Section 7 concludes, and 
outlines some directions for future work.

A shortened version of this paper titled ``Foundations of Generalized 
Reversible Computing,'' which omitted the proofs, was presented at the 
$9^{\mathrm{th}}$ Conference on Reversible Computation (RC17) in 
Kolkata, India {\cite{FoGRC}}.  A preprint of that shorter version which 	
included the proofs in an appendix was subsequently posted online
at {\cite{FoGRC-preprint}}.  The present version includes the proofs
inline, as well as additional figures and discussion.  (The intent is to
further expand it in preparation for journal submission.	)					

\section{Formulating {\lanprinc}}
\label{sec:formland}

{\lanprinc} is, in essence, simply the observation that the loss of 
information from a computation corresponds to an increase in physical 
entropy, which implies a certain associated amount of energy being 
dissipated to the environment in the form of heat.  At one level, this 
statement is just a tautological consequence of the understanding that 
the very \emph{meaning} of physical entropy is, in effect, that part 
of the total information embodied within a given physical system that 
has \emph{already} been permanently lost, in the sense of its having 
been ``scrambled up'' (\emph{i.e.}, randomized), through uncertain or 
chaotic interactions with an unknown environment, to the extent that 
the system cannot, in isolation, be effectively restored to its 
original state via any physical procedure that is practically 
accessible to us.  At this level, there is not much more to 
understand---lost information is entropy, and energy is required to 
expunge that entropy to the environment (as heat).  However, for our 
purposes, it is helpful to also articulate the meaning of (and 
justification for) {\lanprinc} in a more thorough and mathematically 
rigorous way.  This is necessary to (for example) understand exactly 
what \emph{information loss} really means, and under what conditions, 
precisely, information is (or is not) lost in the course of carrying 
out a given computation.  As we will see, such an understanding leads 
to the realization that there is, in fact, a much wider variety of 
computational operations that can avoid information loss and entropy 
emission (and thus are reversible, in those senses) than the 
traditional theory of reversible computing acknowledges.

To explain all of this more formally, we start with some basic 
definitions.  We will not here require or provide a full explication 
of quantum theory, but rather, we will work with a simplified set of 
definitions that is adequate for our purposes.  However, our 
definitions will nevertheless be fully compatible with a more complete 
quantum-mechanical treatment.

\subsection{Physical state spaces, bijective dynamics.}
\label{sec:physstate}

We first present a basic concept of a space (set) of physical states, 
and explain what we mean when we say that a physical dynamics on a 
given state space is bijective.  The bijectivity of real physical 
dynamics is the fundamental postulate from which {\lanprinc} derives.

\begin{defn} \label{def:ss} \textbf{\emph{State spaces.}} 
For our purposes, a (physical) \emph{state space} is a set 
$\Sigma={s_1, s_2, ... , s_N}$ of $N\in\mathbb{N}$ entities called 
(physical) \emph{states} that are mutually distinct objects from each 
other (mathematically), and that are also reliably distinguishable 
from each other, physically.
\end{defn}

It is important, for our purposes, that states be reliably physically 
distinguishable from each other, as well as being mathematically 
distinct, since otherwise the mathematical distinction between them 
could not reliably convey any information-theoretic content.  An 
example of two states that are physically distinguishable from each 
other would be any two pure quantum states represented by mutually 
orthogonal (perpendicular) quantum state vectors.  In contrast, an 
example of a pair of states that would be mathematically distinct, 
but not reliably physically distinguishable, would be any pair of 
quantum state vectors spanned by any angle $\theta<90^\circ$.

An acceptable example of a state space, for purposes of our definition, 
would therefore be any orthonormal set of basis vectors for any 
$N$-dimensional Hilbert space.

For simplicity, here we assume that the cardinality $N$ of the state 
space is a finite, natural number.  Countably infinite or transfinite 
(\emph{e.g.}, continuous) state spaces are not explicitly considered 
in this paper; however, this is not a significant limitation, since it 
is believed \cite{lloyd} that the accessible universe has only 				
finitely many distinguishable states anyway, and, even if that turns out 
to be incorrect, certainly any practically-buildable technology will 
only be able to access finite-sized physical systems exhibiting only 
finitely many distinguishable states for the foreseeable future 
(barring major upheavals in fundamental physics).

Next, we need a concept of a bijective (reversible and deterministic) 
dynamics:

\begin{defn} \label{def:bijdyn} \textbf{\emph{Bijective dynamics.}}
Given any state space $\Sigma$ of possible states of a system at some 
reference time point $t_0 \in \mathbb{R}$, a \emph{deterministic, 
reversible dynamics} or simply \emph{bijective dynamics} on $\Sigma$ 
is a parameterized family of total, one-to-one, single-valued 
functions $D(\Delt t):\Sigma\rightarrow \Sigma(\Delt t)$, where the
parameter $\Delt t\in\mathbb{R}$, mapping states in $\Sigma$ onto 
states in the parameterized family of state spaces $\Sigma(\Delt t)$, 
and where $\Sigma(\Delt t)$ always has the same cardinality as $\Sigma$ 
(that is, $|\Sigma(\Delt t)|=|\Sigma|$ for all $\Delt t\in\mathbb{R}$).  
Additionally, we require that $\Sigma(0)=\Sigma$, and that $D(0)$ 
must be the identity function on $\Sigma$.
\end{defn}

In this definition, the real-valued parameter $\Delt t$ represents 
elapsed time from the reference time point $t_0$, and $D(\Delt t)$ 
is the function mapping states that are in $\Sigma$ at the initial 
time point $t_0$ to the new states in $\Sigma(\Delt t)$ that they will 
become after the time $\Delt t$ has elapsed.  See 
Figure~\ref{fig:bijdyn}.  The set $\Sigma(\Delt t)$ is simply the 
space of possible states at time $t_0 + \Delt t$, given that the state 
at time $t_0$ is in $\Sigma$.  Note that $\Sigma(\Delt t)$ will not, 
in general, be the \emph{same} state space as $\Sigma$ for elapsed 
times $\Delt t\neq 0$, since states may, in general, transform 
continuously over time, yet, the initial state space $\Sigma$ under 
consideration was assumed to only be countable, so it does not itself 
include sufficiently many states to allow continuous change while 
still remaining within the same state set.   

\begin{figure}
\centering
\includegraphics[height=2in]{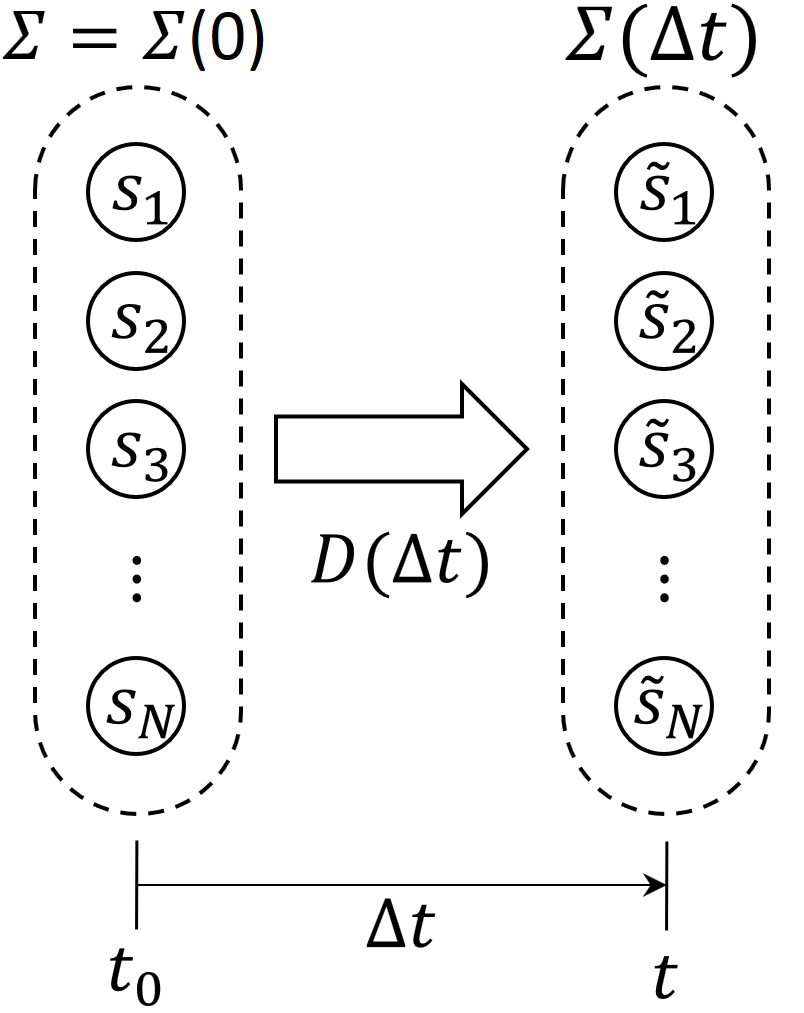}
\caption{For any given amount $\Delt t\in\mathbb{R}$ of elapsed time, 
a bijective dynamics $D$ gives us a one-to-one map $D(\Delt t)$ from 
the state space $\Sigma$ at time $t_0$ to a new state space 
$\Sigma(\Delt t)$ at time $t=t_0+\Delt t$.  Each of the states 
$s_i\in\Sigma$ at time $t_0$ is mapped by $D$ to a unique new state 
$\tilde s_i$ at time $t$.}
\label{fig:bijdyn}
\end{figure}

The assumption that $D(\Delt t)$ is a single-valued function for 
positive values of $\Delt t$ implies that the dynamics is 
\emph{deterministic} (meaning, the state at any future time is 
determined by the present state at $t_0$), and the assumption that 
$D(\Delt t)$ is single-valued for negative values of $\Delt t$ implies
that it is ``reverse-deterministic'' or \emph{reversible}, meaning, 
the state at any past time can be determined from the present state.  

Also, although here we did not specifically require $D$ to exhibit 
general time-translation symmetry (that is, to always have exactly the 
same form, regardless of the initial time $t_0$), the fact that, for 
any $\Delt t$, the function $D(\Delt t)$ is one-to-one implies that it 
has a corresponding inverse function $D^{-1}(\Delt t)$, and therefore, 
$D$ also induces a bijective map
$$
D'(\Delt t_1,\Delt t_2) 
	= D(\Delt t_2)\circ D^{-1}(\Delt t_1)\; : \;
		\Sigma(\Delt t_1)\rightarrow \Sigma(\Delt t_2)
$$
between the states at any pair of times $t_0+\Delt t_1$ and $t_0+\Delt 
t_2$, where $\Delt t_1,\Delt t_2 \in \mathbb{R}$.  Thus, even if the 
dynamics $D$ was reexpressed relative to any different reference time 
point $t'_0\neq t_0$, then, even if it didn't retain exactly the same 
form under that transformation, it would, at least, remain bijective.

Finally, we need a concept of the completeness of state-space 
descriptions of physical systems.  For our purposes, we will take 
``physical system'' itself to be a primitive, undefined concept.

\begin{defn} \label{def:sscomp} \textbf{\emph{Completeness of state 
spaces.}}  We say that a state space $\Sigma$ representing a set of 
distinguishable states of a particular physical system $\Pi$ at
some point in time $t_0$ is \emph{complete} if and only if there
is no larger state space $\Sigma'$, \emph{i.e.} with 
$|\Sigma'|>|\Sigma|$, that also describes $\Pi$.
\end{defn}

The point of the concept of the completeness of the state space is 
just to say that the state representation of the physical state is
fully detailed, \emph{i.e.}, that its states are not actually 
composite entities that could be factored into aggregates of more
fundamental distinguishable states.  It is perhaps an open 
philosophical question about physics whether we can ever really know
with certainty that a given state-space description of a physical
system is really complete; however, we generally assume that there is
always some state-space description that is at least complete with 
respect to all of the ways to probe a system that have been discovered 
at a given point in time.

Now, given the above definitions, we can state the following assertion, 
which we consider to be a solidly-established fact about all of the 
viable modern theories of fundamental physics, and which can be 
considered to be the basic postulate upon which the proof of 
{\lanprinc} rests.

\begin{assertion} \label{ass:bijdyn} \textbf{\emph{Bijectivity of 
physical dynamics.}}  All viable modern theories of fundamental 
physics (\emph{i.e.}, all those that are empirically well-founded, 
logically consistent, and parsimonious) exhibit the property that for 
any closed (isolated) physical system $\Pi$, if we characterize it by 
some complete state space $\Sigma$ at some reference time $t_0$, the 
time-evolution of that system (over all future and past times) is 
described by some bijective dynamics $D$ on $\Sigma$.
\end{assertion}

Assertion~\ref{ass:bijdyn} definitely holds in the case of all viable 
quantum theories, which share the property that the system's dynamics 
is implicitly determined by some time-independent, rank-$N$ 
Hamiltonian operator $H$ (an energy-valued Hermitian linear 
operator), from which we can derive a unitary time-evolution 
operator
$$
U(\Delt t) = \mathrm{e}^{-\mathrm{i}H\Delt t/\hbar},
$$
and the evolved state space $\Sigma(\Delt t)$ at time $t_0+\Delt t$ is 
then 
just
$$
\Sigma(\Delt t) = \{U(\Delt t)\ket{s_i} \; : \; i\in \{1,...,N\}\},
$$
where $\ket{s_i}$ denotes a representation of state $s_i$ as a quantum 
state vector; \emph{e.g.}, expressed in a basis where the states
$s_i$ correspond to basis vectors, this could simply be a rank-$N$ column 
vector, whose $i^{\mathrm{th}}$ element is 1, and other elements 0.  The map 
determined by the dynamics between the state spaces at different times 
is then just
$$
[D(\Delt t)](s_i) = U(\Delt t)\ket{s_i}.
$$

The above formulation covers standard quantum mechanics, and also 
(if we extend it to infinite-dimensional state spaces) all of the 
standard relativistic quantum field theories, which can successfully 
model all of the known fundamental physical forces except for gravity.  
Although, at this time, we do not yet have a complete and well-tested 
theory of physics that succeeds at unifying gravity (\emph{i.e.}, 
general relativity) with quantum mechanics, it is generally assumed by 
physicists that whenever we do find such a theory, it will still 
exhibit the same general properties above that are shared by all of 
the existing quantum theories.

It's important to note that if physical dynamics was not reversible, 
then the Second Law of Thermodynamics would not be true; in detail, if 
the dynamical map $D(\Delt t)$ was many-to-one for any positive 
elapsed times $\Delt t > 0$, then formerly-distinguishable states 
could merge together, and entropy would spontaneously decrease.  So, 
from this perspective, the reversibility of real physical dynamics 
follows from the empirical observation that the Second Law does not 
appear to be violated.  Likewise, the determinism of the dynamics can 
be inferred from empirical observations showing \emph{e.g.} the 
effectiveness of Schr\"odinger's deterministic wave equation at 
modeling the observed dynamics of closed quantum systems.

In any event, if one accepts the bijectivity of dynamical evolution as 
a truism of mathematical physics, then the validity of Landauer's 
Principle follows rigorously from it, as a theorem.  To state and prove 
that theorem formally, we will require just a few more definitions.

\subsection{Computational state spaces.}
\label{sec:compstate}

Here, we define what we mean by a computational, as opposed to 
physical, state space.  Such a distinction is necessary because 
computational states are typically considered to be abstract, 
higher-level entities; we do not typically consider that what is 
important in our description of a computer includes the complete, 
fully-detailed physical dynamics of the physical system implementing 
that computer.

\begin{defn} \label{def:compss} \textbf{\emph{Computational subspaces, 
computational states.}} Given a state space $\Sigma$, a 
\emph{computational subspace} $C$ of $\Sigma$ is a partition of the 
set $\Sigma$, \emph{i.e.}, a set of non-overlapping, non-empty subsets 
of $\Sigma$ whose union is $\Sigma$.  (See Figure~\ref{fig:comp-ss} 
for an illustration.)  We say that a physical system $\Pi$ is \emph{in 
computational state} $c_j\in C$ whenever the physical state of the 
system is not reliably distinguishable from some $s_i$ such that 
$s_i\in c_j$.
\end{defn}

\begin{figure}
\centering
\includegraphics[height=2.5in]{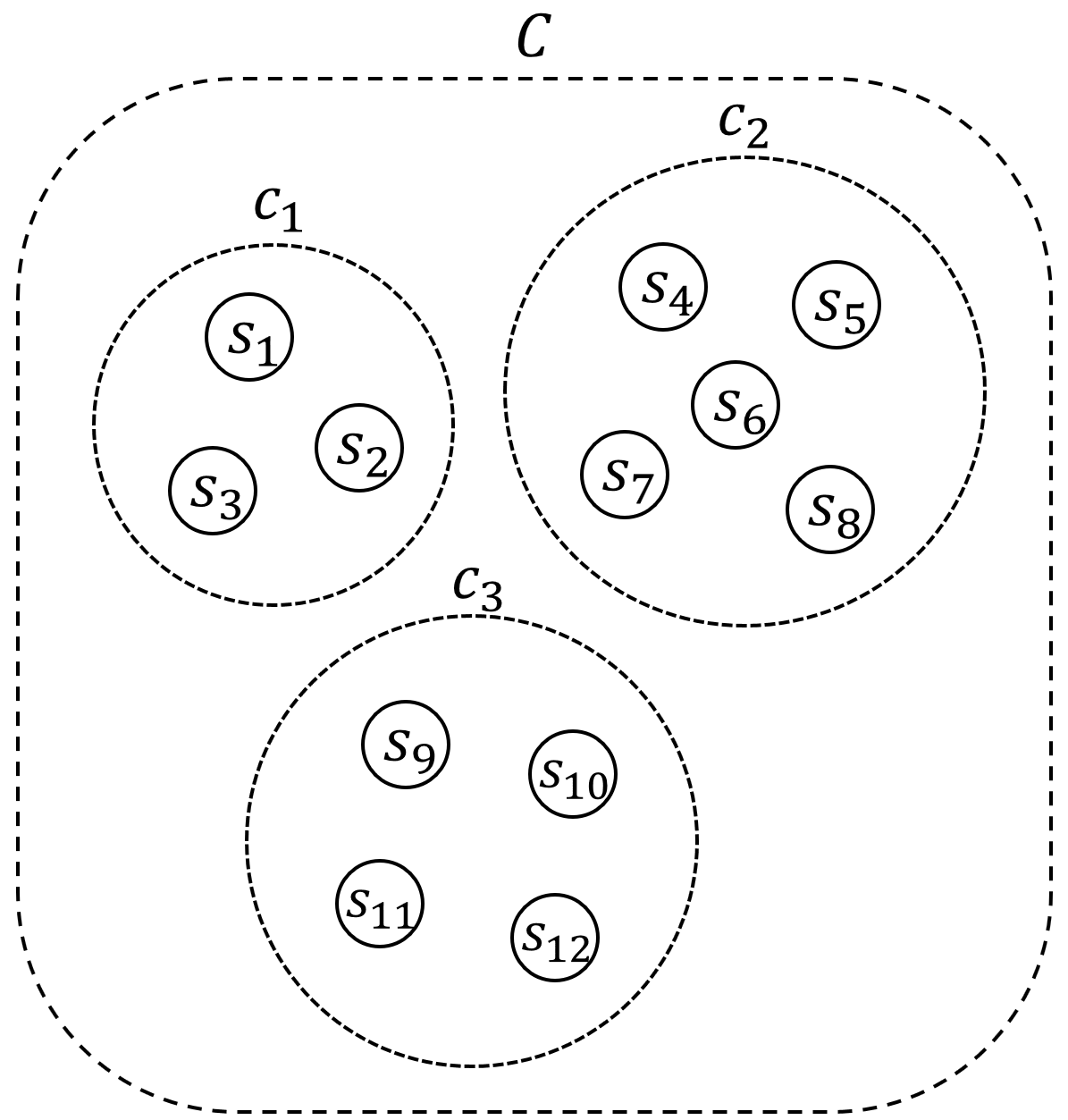}
\caption{An example of a physical state space 
	$\Sigma=\{s_1,s_2,...,s_{12}\}$ with $N=12$ states, partitioned into
	a computational subspace $C=\{c_1, c_2, c_3\}$ with $M=3$ states,
	$c_1 = \{ s_1, s_2, s_3\}$, $c_2 = \{ s_4, s_5, ..., s_8 \}$, 
	and	$c_3 = \{ s_9, s_{10}, s_{11}, s_{12} \}$.
}
\label{fig:comp-ss}
\end{figure}

The idea of a computational subspace is simply that not all features
of a system's physical state are computationally meaningful; for 
example, the detailed microscopic state of a heat sink surrounding a 
computer would not generally be considered to be part of the machine's 
computational state.  Two states that are, in principle, 
distinguishable from each other physically, but that are not 
considered to be distinct in terms of their computational 
interpretation, would be considered to be part of the same 
computational state $c_j$.

Note that according to this definition, a system could be considered 
to be in more than one computational state at the same time, in the 
case where it takes on a physical state that is not reliably 
distinguishable from either of two different physical states $s_1, 
s_2$ that happen to be contained in different elements of the 
partition $C$.  This would be the case, for example, in a quantum 
computer that has been carefully prepared so as to occupy a quantum 
superposition of two distinct computational states (and such 
situations are, in fact, required for execution of quantum algorithms).  
However, for our purposes in the present paper, we will assume that 
real computational systems are normally intentionally designed to be 
highly decoherent systems in which the physical states that are used 
to assemble computational states correspond to naturally-stable 
``pointer'' states, as in \cite{zurek}.  Under this assumption, 			
superpositions of computational states will, in practice, be extremely 
rare; therefore, we normally assume that a system can be considered to 
only occupy one computational state at a time, with probability 
approaching 1.  A more comprehensive version of the theoretical model 
presented in this paper would relax that restriction.

Next, let us assume that we can also identify an appropriate 
computational subspace $C(\Delt t)$ which is a partition of the 
evolved physical state space $\Sigma(\Delt t)$ at any past or future 
time $t_0 + \Delt t\in\mathbb{R}$.  If we model the computer itself as 
being assembled or disassembled over the timeline, the size of the 
computational subspace $C(\Delt t)$ might change over time, but that 
will not materially affect any aspect of our subsequent discussion.

\subsection{Probability distributions, entropy measures.}
\label{sec:probent}

Consider, now, any initial-state probability distribution $p_0$ over 
the complete state space $\Sigma=\Sigma(0)$ at time $t=t_0$, that is, 
a real-valued function
$$
p_0: \Sigma(0)\rightarrow [0,1]
$$
such that the state probabilities $p_0(s_i)$ sum to unity, $\sum_i 
p_0(s_i) = 1$.  This then clearly induces an implied initial 
probability distribution $P_\mathrm{I}$ over the \emph{computational} 
states at time $t_0$ as well:
$$
P_\mathrm{I}(c_j) = \sum_{k=0}^{|c_j|} p_0(s_{j,k}),
$$
where $s_{j,k}$ denotes the $k$th physical state in computational 
state $c_j\in C$.

For any probability distributions $p$ and $P$ over physical and 
computational states, we can then define corresponding entropy 
measures:

\begin{defn} \label{def:physent} \textbf{\emph{Physical entropy.}}
Given any probability distribution $p$ over a physical state space 
$\Sigma$, the \emph{physical entropy} $S(p)$ is defined by:
$$
S(p) = \sum_{i=0}^{N=|\Sigma|} p(s_i)\log\frac{1}{p(s_i)},
$$
\end{defn}
where the logarithm there can be considered to be an indefinite 
logarithm, dimensioned in generic logarithmic units, or, if we wish to 
express the result in particular logarithmic units such as bits (log 
base 2 units) or ``nats'' (log base e units) we can substitute a 
definite logarithm for the indefinite one as follows:
$$
\log x = (1\,\mathrm{bit})\log_2 x 
       = (1\,\mathrm{nat})\log_\mathrm{e} x 
	   = k_\mathrm{B} \ln x 
$$
where note that Boltzmann's constant $k_\mathrm{B}$ here can be 
considered to simply represent 1 nat or the natural logarithmic unit 
(log e), interpreted as being a physical unit of 
entropy.\footnote{Boltzmann's constant is more familiarly defined in 
terms of energy/temperature units, such as $k_\mathrm{B} = 1.38 
\times 10^{-23}\,\mathrm{J}/\mathrm{K}$, but another way of 
understanding the meaning of this formula is simply to say that 1 
degree Kelvin of temperature is equivalent to $1.38 \times 
10^{-23}\,\mathrm{J}$ per natural-log unit of entropy.  In any case, 
the identity $k_\mathrm{B} = 1\,\mathrm{nat}$ can be considered to be 
a simple factual statement that reflects the fundamentally 
information-theoretic nature of physical entropy.}

The above definition of physical entropy comports with the standard 
quantum-mechanical concept of the von Neumann entropy of a mixed 
quantum state $\rho$; if one diagonalizes the density-matrix 
description of any mixed quantum state, the probabilities $p(s_i)$ of 
the basis states $\ket{s_i}$ lie along the diagonal, and the standard 
definition of Von Neumann entropy reduces to the definition above.

The bijectivity of physical dynamics then implies the following 
theorem:

\begin{thm} \label{thm:entcons} \textbf{\emph{Conservation of entropy.}}
The physical entropy of any closed system, as determined for any 
initial state distribution $p_0$, is exactly conserved over time, 
under any viable physical dynamics.  

In other words, under any of the viable theories of fundamental 
physics mentioned in Assertion~\ref{ass:bijdyn}, if the physical 
entropy of an initial-state distribution $p_0(s_i)$ at time $t_0$ is 
$S(0)$, and we evolve that system over an elapsed time $\Delt t \in 
\mathbb{R}$ according to its bijective dynamics $D(\Delt t)$, 
the physical entropy $S(\Delt t)$ of its final-state probability 
distribution $p_{\Delt t}$ that applies at time $t_0 + \Delt t$ 
will be the exact same value, $S(\Delt t)=S(0)$.
\end{thm}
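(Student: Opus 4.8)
The plan is to exploit the fact that a bijective dynamics merely \emph{relabels} the physical states: it carries the probability weight of each initial state intact onto a single, distinct final state, and the entropy functional of Definition~\ref{def:physent} depends only on the multiset of probability values, never on which states happen to carry them. So $S$ must come out the same before and after.

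Concretely, I would fix an elapsed time $\Delt t\in\mathbb{R}$ and, following Figure~\ref{fig:bijdyn}, write $\tilde s_i=[D(\Delt t)](s_i)$ for the evolved image of each $s_i\in\Sigma$, where $D$ is the bijective dynamics supplied by Assertion~\ref{ass:bijdyn}. By Definition~\ref{def:bijdyn}, $D(\Delt t)\colon\Sigma\to\Sigma(\Delt t)$ is total, single-valued and one-to-one, and since $|\Sigma(\Delt t)|=|\Sigma|=N$ with both sets finite, it is automatically onto as well; hence $\Sigma(\Delt t)=\{\tilde s_1,\dots,\tilde s_N\}$ with all $\tilde s_i$ distinct. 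The next step is the transport-of-probability identity $p_{\Delt t}(\tilde s_i)=p_0(s_i)$: single-valuedness of $D(\Delt t)$ means the weight $p_0(s_i)$ on $s_i$ flows entirely onto $\tilde s_i$ and onto no other state, while injectivity --- equivalently, single-valuedness of $D^{-1}(\Delt t)$ --- means $\tilde s_i$ inherits weight from $s_i$ alone; combining the two gives the identity. Finally I would substitute into Definition~\ref{def:physent},
$$
S(\Delt t)=\sum_{i} p_{\Delt t}(\tilde s_i)\log\frac{1}{p_{\Delt t}(\tilde s_i)}
          =\sum_{i} p_0(s_i)\log\frac{1}{p_0(s_i)} = S(0),
$$
where the first equality holds because, by bijectivity, letting $i$ run over $1,\dots,N$ enumerates every element of $\Sigma(\Delt t)$ exactly once, and the second applies the transport identity term by term.

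I do not expect a genuine obstacle: the result is essentially a one-line reindexing of a finite sum. The one place that wants care is the transport identity $p_{\Delt t}(\tilde s_i)=p_0(s_i)$ --- it should be made explicit that \emph{both} determinism (forward) and reverse-determinism are needed, since determinism alone would permit probability to pile up at a shared image while reverse-determinism alone would permit it to spread, and that finiteness of $\Sigma$ is what promotes ``one-to-one'' to ``bijective'' so that the reindexed sum still ranges over all of $\Sigma(\Delt t)$. I would close by noting that this is just the classical shadow of the familiar quantum statement that unitary conjugation $\rho\mapsto U(\Delt t)\,\rho\,U(\Delt t)^\dagger$ leaves the von~Neumann entropy fixed, since $U$ only rotates the eigenbasis of $\rho$ while preserving its eigenvalue spectrum and hence $S$.
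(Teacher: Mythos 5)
Your argument is correct and is essentially the paper's own proof: the paper likewise observes that the one-to-one dynamical map means $p_0$ and $p_{\Delt t}$ comprise the same multiset of probability values, so the entropy sums coincide. You simply spell out the relabeling and the transport identity $p_{\Delt t}(\tilde s_i)=p_0(s_i)$ in more detail, which is a faithful elaboration rather than a different route.
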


\begin{proof}
Since the dynamical map from the state at time $t_0$ to the state at 
time $t_0+\Delt t$ is a one-to-one function, the probability 
distributions $p_0$ and $p_{\Delt t}$ comprise the exact same bag 
(multiset) of real numbers (albeit reassigned to new states); 
therefore, the entropy values of these distributions are identical. 
\qed 
\end{proof}

It's a standard theorem of quantum theory that the von Neumann entropy 
$S(\rho)$ of any mixed quantum state $\rho$ is conserved under any 
unitary time-evolution operator $U(\Delt t)$; Theorem~\ref{thm:entcons} 
can be considered to be a generalization of that standard theorem that 
does not depend on the full mathematical structure of quantum theory, 
but only on the bijectivity of its dynamics.

It's important to note that the validity of Theorem~\ref{thm:entcons} 
(entropy is conserved) depends on the assumption that, from our 
theoretical perspective, we know (and in principle, can track) the 
dynamical evolution of the state exactly---if we did not (for example, 
if we didn't know the laws of physics exactly, or if we discarded some 
information about the state as the system evolved), then we would
accumulate increased uncertainty about the final state, compared to 
the initial state, and so entropy would be seen to increase in 
practice, which is what we in fact observe.  But, it remains true that 
\emph{if} we knew the precise dynamics, and could track the evolution 
exactly, the entropy of a given state distribution would not be seen 
to increase at all as that distribution is evolved by the dynamics.

Next, we can define what is sometimes called the ``information entropy'' 
of the computational state:

\begin{defn} \label{def:infent} \textbf{\emph{Information entropy.}}
Given any probability distribution $P$ over a computational state 
space $C$, the \emph{information entropy} or \emph{computational 
entropy} $H(P)$ is defined by:
$$
H(P) = \sum_{j=0}^{|C|} P(c_j)\log\frac{1}{P(c_j)},
$$
\end{defn}
which, like $S(p)$, is dimensioned in arbitrary logarithmic units.

Note that further, this is really \emph{the exact same definition} as 
for physical entropy $S(p)$, except that here, we are just applying it 
to a different probability distribution, namely, the one induced over 
the computational states.  The fact that different logarithmic units 
are most conventionally chosen for computational versus physical 
entropy (bits versus nats, respectively) is just an historical 
accident, an arbitrary difference in units, and is totally 
inconsequential.  Entropy is entropy!  

One reason, however, why we might sometimes want to use the name 
``information entropy'' for this concept is that the information 
contained in the computational state might, in principle, be known 
information (and thus, not ``true'' entropy at all!) if the history 
of how that information was computed from other known information is 
known.  However, from the perspective of an individual computational 
device that does not have access to that kind of nonlocal knowledge, 
an uncertain statistical description of the state remains appropriate, 
and thus, the information in the computational state is effectively 
still entropy, for purposes of taking a thermodynamics of computation
perspective towards the analysis of local device operations.

Finally, we can define the ``non-computational entropy'' as comprising 
the remainder of the total physical entropy, other than the 
computational part:

\begin{defn} \label{def:ncent} \textbf{\emph{Non-computational entropy.}}
Given a situation where the total physical entropy is $S$ and the
computational entropy is $H$, the \emph{non-computa\-tional entropy} 
is defined by:
$$
S_{\mathrm{nc}} = S - H.
$$
\end{defn}
It is clear that $S_{\mathrm{nc}}\geq0$ always, since the summing of 
probabilities that occurs in aggregating physical states to form 
computational states can only reduce the entropy; the computational 
entropy can, thus, never be greater than the total physical entropy.

To understand why non-computational entropy is physically meaningful,
it is helpful to consider the following theorem:

\begin{thm} \label{thm:physnc} \textbf{\emph{Physical role of 
non-computational entropy.}}  Non-com\-pu\-ta\-tion\-al entropy is the 
physical entropy conditioned on the computational information.  In 
other words,
$$
S_{\mathrm{nc}} = S(s|c),
$$
where $S(s|c)$ denotes the conditional entropy of random variable $s$
(the physical state) when the value of random variable $c$ (the 
computational state) is known.

Conditional entropy is itself defined, in general, by:
$$
S(x|y) = \sum_j p(y_j) S(p(x\;|\;y=y_j)),
$$
or in other words, as the weighted average, given the probability 
distribution $p(y_j)$ over possible values $y_j$ of random variable 
$y$, of the entropies of the probability distribution $p(x\;|\;y=y_j)$ 
over $x$, conditioned on the given value of $y$.
\begin{proof}
Since the physical state $s$ determines the computational state $c$, 
specifying the physical state $s$ is the same as jointly specifying 
$s$ and $c$, and so the statement of the theorem boils down to a 
special case of the chain rule of conditional entropy,
$$
S(s|c) = S(s,c) - S(c),
$$
since the conditional entropy $S(s|c)$ is stated by the theorem to
equal $S_{\mathrm{nc}} = S - H = S(s) - S(c) = S(s,c) - S(c)$. \qed
\end{proof}
\end{thm}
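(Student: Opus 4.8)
The plan is to reduce the claim to the chain rule for Shannon entropy, exploiting the fact that the computational subspace is a partition. The quantity to be identified is $S_{\mathrm{nc}} = S - H$ from Definition~\ref{def:ncent}, where $S = S(p_0)$ is the physical entropy of the initial-state distribution (Definition~\ref{def:physent}) and $H = H(P_\mathrm{I})$ is the computational entropy of the induced distribution (Definition~\ref{def:infent}), both read off the same $p_0$. I will write $S(s)$ for $S$ and $S(c)$ for $H$, to emphasize that these are just the entropies of the two random variables ``physical state'' and ``computational state.''

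First I would record the structural fact that makes everything work: since $C$ is a partition of $\Sigma$ (Definition~\ref{def:compss}), every physical state $s_i$ lies in exactly one block $c_j$, so the computational state $c$ is a deterministic function of the physical state $s$. (Here I rely on the decoherence assumption of Section~\ref{sec:compstate}, which lets us ignore superposition states straddling two blocks.) Consequently the joint distribution of the pair $(s,c)$ is just $p_0$ carried along the graph of that function, and so the joint entropy satisfies $S(s,c) = S(s) = S$.

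Next I would invoke the chain rule for conditional entropy, $S(s|c) = S(s,c) - S(c)$, with $S(s|c)$ the quantity defined in the theorem statement. If a self-contained argument is preferred, I would instead expand $S(s|c) = \sum_j P_\mathrm{I}(c_j)\,S\bigl(p_0(\cdot \mid c_j)\bigr)$ directly, substitute $p_0(s_i \mid c_j) = p_0(s_i)/P_\mathrm{I}(c_j)$ for $s_i \in c_j$, split the logarithm of the quotient, and collect terms to recover $S(s,c) - S(c)$; this is a short routine calculation. Either way, combining with the previous step gives $S(s|c) = S - S(c) = S - H = S_{\mathrm{nc}}$, which is the assertion.

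I do not expect a substantive obstacle: the only point needing care is the observation that $c$ is genuinely a function of $s$ (so that $S(s,c) = S(s)$ rather than something strictly larger), together with the bookkeeping in the chain-rule step; everything else is a direct unwinding of the three entropy definitions.
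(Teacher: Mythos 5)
Your argument is correct and is essentially the paper's own proof: both rest on the observation that the partition structure makes $c$ a deterministic function of $s$, so $S(s,c)=S(s)$, and then apply the chain rule $S(s|c)=S(s,c)-S(c)$ to conclude $S(s|c)=S-H=S_{\mathrm{nc}}$. Your optional direct expansion of the conditional entropy is just a spelled-out verification of that same chain-rule step, not a different route.
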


The practical import of this theorem is simply to clarify that the 
non-com\-pu\-ta\-tion\-al entropy is not just some arbitrary, meaningless 
quantity, rather, it is the expected value of the physical entropy 
in contexts where the computational entropy is really known 
information, which will commonly be the case, whenever the 
computational state is computed deterministically from other known 
information.  Therefore, it has a physical significance.  Increased 
non-computational entropy means increased physical entropy from the 
user's perspective (assuming that the user cannot keep track of the 
detailed physical state, but only, at most, the computational state).

\subsection{Statements of {\lanprinc}.}
\label{sec:landprinc}

The above definitions then suffice to allow us to formulate and prove 
{\lanprinc}, in both its most general quantitative form, as well as in 
another form more frequently seen in the literature.

\begin{thm} \label{thm:genland} \textbf{\emph{{\lanprinc} (general 
formulation).}}  If the entropy of the computational state of a system 
at initial time $t_0$ is $H_\mathrm{I} = H(P_\mathrm{I})$, and we allow 
that system to evolve, according to its physical dynamics, to some 
other ``final'' time $t_0 + \Delt t$, at which its computational 
entropy becomes $H_\mathrm{F} = H(P_\mathrm{F})$, where $P_\mathrm{F} 
= P(\Delta t)$ is the induced probability distribution over the 
computational state set $C(\Delt t)$ at time $t_0+\Delt t$, then the 
non-computational entropy is increased by
$$
\mathrm{\Delta}S_{\mathrm{nc}} = H_\mathrm{I} - H_\mathrm{F}.
$$
\begin{proof}
Total physical entropy is conserved by Theorem~\ref{thm:entcons}.  The 
computational part of the total entropy decreases by $H_\mathrm{I} - 
H_\mathrm{F}$, by hypothesis.  Therefore, the noncomputational part 
(the remainder) must increase by that amount. \qed
\end{proof}
\end{thm}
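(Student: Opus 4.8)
The plan is to obtain the claimed identity directly from the conservation of total physical entropy (Theorem~\ref{thm:entcons}) together with the definition of non-computational entropy (Definition~\ref{def:ncent}), so the argument is essentially one line of bookkeeping. First I would write out what $S_{\mathrm{nc}}$ is at the two times in question: at the initial time $t_0$, Definition~\ref{def:ncent} gives $S_{\mathrm{nc},\mathrm I} = S_{\mathrm I} - H_{\mathrm I}$, where $S_{\mathrm I} = S(p_0)$ is the total physical entropy of the initial distribution on $\Sigma = \Sigma(0)$ and $H_{\mathrm I} = H(P_{\mathrm I})$ is the induced computational entropy on $C = C(0)$; and at the final time $t_0 + \Delt t$ it gives $S_{\mathrm{nc},\mathrm F} = S_{\mathrm F} - H_{\mathrm F}$, with $S_{\mathrm F} = S(p_{\Delt t})$ on $\Sigma(\Delt t)$ and $H_{\mathrm F} = H(P_{\mathrm F})$ on $C(\Delt t)$.

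Next I would invoke Theorem~\ref{thm:entcons}, which (because the underlying physical dynamics $D(\Delt t)$ is bijective, by Assertion~\ref{ass:bijdyn}) tells us the total physical entropy is exactly conserved: $S_{\mathrm F} = S_{\mathrm I}$. Subtracting the two expressions for the non-computational entropy then yields
$$
\Delt S_{\mathrm{nc}} = S_{\mathrm{nc},\mathrm F} - S_{\mathrm{nc},\mathrm I}
 = (S_{\mathrm F} - S_{\mathrm I}) - (H_{\mathrm F} - H_{\mathrm I})
 = H_{\mathrm I} - H_{\mathrm F},
$$
which is exactly the asserted relation. I would also note in passing that the hypothesis is used only to \emph{name} $H_{\mathrm I}$ and $H_{\mathrm F}$; no assumption that $H_{\mathrm F} \le H_{\mathrm I}$ is required, and in the case $H_{\mathrm F} < H_{\mathrm I}$ (computational information loss) the identity says precisely that the discarded information reappears as an increase of non-computational --- \ie, effectively thermodynamic, by Theorem~\ref{thm:physnc} --- entropy.

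I do not expect any genuine obstacle here: the entire substantive content has been pushed into Theorem~\ref{thm:entcons}, whose own proof rests only on the elementary fact that a bijection merely permutes the multiset of state probabilities and hence leaves the entropy functional invariant. The one point meriting a sentence of care is confirming that the two physical entropies being equated live on the ``matching'' state spaces $\Sigma$ and $\Sigma(\Delt t)$ (and likewise the computational entropies on $C$ and $C(\Delt t)$); since Theorem~\ref{thm:entcons} is stated precisely so as to compare $S$ across exactly those two spaces, and $S_{\mathrm{nc}}$, $H$ are defined pointwise in time, nothing further is needed and the proof is complete.
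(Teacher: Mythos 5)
Your proposal is correct and follows essentially the same route as the paper's own proof: invoke conservation of total physical entropy (Theorem~\ref{thm:entcons}) and the decomposition $S_{\mathrm{nc}} = S - H$ at each time, then subtract. You have merely written out explicitly the bookkeeping that the paper's three-sentence proof leaves implicit, so nothing further is needed.
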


That formulation captures the conceptual core of {\lanprinc}, and as 
we can see, its proof is really extremely simple, essentially trivial.  
It amounts to the simple observation that, since total physical 
entropy is dynamically conserved, then any decrease in computational 
entropy must cause a corresponding increase in non-computational 
entropy.  

Furthermore, conventional digital devices are typically designed to 
locally reduce computational entropy, \emph{e.g.}, by erasing 
``unknown'' old bits obliviously (that is, without utilizing 
independent knowledge of their previous value) or (in other words) by 
destructively overwriting them with newly-computed values.  As a 
result, typical device operations necessarily eject entropy into the 
non-computational form, and so, over time, non-computational entropy 
typically builds up in the system (manifesting as heating), but, we 
generally assume that it cannot build up indefinitely in the system 
(since eventually the physical mechanism would break down from 
overheating), but must instead eventually be moved out into some 
external thermal environment at some temperature $T$, which involves 
the dissipation of energy $\Delt E_\mathrm{diss} = 
T\mathrm{\Delta}S_\mathrm{nc}$ to the form of heat in that environment, 
by the very definition of thermodynamic temperature,
$$
\frac{1}{T} = \frac{\partial S}{\partial Q},
$$  
where $S$ and $Q$ are respectively the entropy and heat energy content 
of a heat bath at thermodynamic equilibrium.  From 
Theorem~\ref{thm:genland} together with these facts, along with the 
logarithmic identity $1\,\mathrm{bit} = (1\,\mathrm{nat})/\log_2 
\mathrm{e} = k_\mathrm{B}\ln 2$, follows the more commonly-seen 
statement of {\lanprinc}:

\begin{cor} \label{cor:comland}
\emph{{\lanprinc} (common form).}  For each bit's worth of information 
that is lost within a computer (\emph{e.g.}, by obliviously erasing or 
destructively overwriting it), an amount of energy
$$
E_\mathrm{diss} = k_\mathrm{B}T\ln 2
$$
must eventually be dissipated to the form of heat added to some 
environment at temperature $T$.
\begin{proof}
See preceding discussion.
\end{proof}
\end{cor}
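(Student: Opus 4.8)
The plan is to obtain this Corollary as an essentially immediate specialization of Theorem~\ref{thm:genland}, combined with two physical inputs: the fact that non-computational entropy cannot be stored in a bounded device indefinitely, and the thermodynamic definition of temperature. First I would instantiate Theorem~\ref{thm:genland} to the scenario in which exactly one bit of computational information is destroyed---say, an ``unknown'' bit is obliviously erased or destructively overwritten---so that the computational entropy drops by $H_\mathrm{I} - H_\mathrm{F} = 1\,\mathrm{bit}$. Theorem~\ref{thm:genland} then hands us $\Delt S_\mathrm{nc} = 1\,\mathrm{bit}$ directly: that bit's worth of entropy has simply migrated from the computational degrees of freedom into the non-computational ones, while the total physical entropy is unchanged by Theorem~\ref{thm:entcons}.

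Next I would argue that this freshly-created non-computational entropy must eventually be exported to an external thermal environment. The reasoning here is physical rather than purely formal: non-computational entropy accumulating in a device manifests as heating, and since the device has finite heat capacity and a finite tolerance for temperature rise, any repeated or sustained entropy production must ultimately be carried away into some reservoir held at some temperature $T$, on pain of the mechanism eventually breaking down. Treating that reservoir as an idealized heat bath---large enough that absorbing the entropy in question does not appreciably change its temperature---the thermodynamic definition $1/T = \partial S/\partial Q$ integrates to $\Delt E_\mathrm{diss} = T\,\Delt S_\mathrm{nc}$. Substituting $\Delt S_\mathrm{nc} = 1\,\mathrm{bit}$ and invoking the logarithmic identity $1\,\mathrm{bit} = k_\mathrm{B}\ln 2$ then yields $E_\mathrm{diss} = k_\mathrm{B}T\ln 2$; for the loss of $n$ bits the same argument gives $n\,k_\mathrm{B}T\ln 2$ by linearity.

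The main obstacle is not any calculation---the algebra is trivial---but rather making the two physical premises precise enough to merit the word ``proof.'' In particular, the claim that the non-computational entropy \emph{must} leave the system is a thermodynamic idealization (it presumes the device cannot absorb entropy without bound and that an environment at a well-defined temperature $T$ is available), and the passage from the differential relation $1/T = \partial S/\partial Q$ to the finite identity $\Delt E_\mathrm{diss} = T\,\Delt S_\mathrm{nc}$ tacitly requires the reservoir to behave as a genuine heat bath at equilibrium. I would flag these as the load-bearing assumptions and note that, granting them, the Corollary follows from Theorem~\ref{thm:genland} with essentially no further work---which is why the paper's own proof can legitimately defer to the preceding discussion.
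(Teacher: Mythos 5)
Your proposal is correct and follows essentially the same route as the paper, whose ``proof'' simply defers to the preceding paragraph: instantiate Theorem~\ref{thm:genland} with a one-bit drop in computational entropy, invoke the physical premise that the resulting non-computational entropy cannot accumulate indefinitely and must be expelled to an environment at temperature $T$ via $\Delta E_\mathrm{diss}=T\,\Delta S_\mathrm{nc}$ (from $1/T=\partial S/\partial Q$), and convert units with $1\,\mathrm{bit}=k_\mathrm{B}\ln 2$. Your explicit flagging of the two physical idealizations as the load-bearing assumptions is a fair and accurate gloss on what the paper itself leaves informal.
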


Next, we will see that, if we examine a little more precisely what are 
the conditions for ``entropy ejection'' by device operations, we find 
that such an event, in the case of logically deterministic 
computational operations, must occur whenever we operate a mechanism 
that is designed to transform either of (at least) two initial local
computational states that each have nonzero probability into the same 
final computational state---since doing so will result in a local
reduction in the computational entropy.  We will develop these ideas 
further in the next section.  But, as we will see, the qualifier 
``have nonzero probability'' in the preceding statement turns out to 
be essential, and this, in fact, is what makes the difference between 
the traditional theoretical model of reversible computing, and the 
more generalized framework developed here.

\section{Reformulating Reversible Computing Theory}
\label{sec:rcthy}

The conceptual development of Generalized Reversible Computing (GRC) 
theory rests on a process of very carefully and thoroughly analyzing 
the implications of {\lanprinc} (in its general formulation above) 
for computation.

Carrying out such an analysis allows us, first, to formally verify 
what we call the \emph{Fundamental Theorem of Traditional Reversible 
Computing Theory} (Theorem~\ref{thm:tradrc} below), which states that 
deterministic computational operations that are always 
non-entropy-ejecting, independently of their statistical operating 
context, must be (unconditionally) logically reversible.  

However, we can then go further in our analysis, and also prove 
a new \emph{Fundamental Theorem of Generalized Reversible Computing 
Theory} (Theorem~\ref{thm:genrc} below), which demonstrates that 
additionally, a computation that applies any arbitrary deterministic 
operation (which is, in general, only what we call 
\emph{conditionally} logically reversible) \emph{within any specific 
operating context that satisfies any of the preconditions for the 
reversibility of that operation} is also non-entropy-ejecting, 
according to {\lanprinc}.  This then establishes that it is actually 
the more general concept of conditional reversibility, rather than the 
more restrictive traditional concept of unconditional reversibility, 
that is the most general concept of logical reversibility that is 
consistent with the requirement of avoiding ejection of entropy from 
the computational state under {\lanprinc}.  

The key insight that allows us to advance from the traditional theory 
to the more general one is simply the realization, discussed in 
\cite{deben}, that a computation \emph{per se} consists of not 				
\emph{just} a choice of an abstract computational operation to be 
carried out, but also a specific statistical operating context in 
which that operation is to be applied.  Without considering the 
specific statistical context, one cannot calculate the initial and 
final computational entropies with any degree of accuracy, and 
therefore, one cannot correctly infer whether any entropy is in fact 
ejected from the computational state in the course of the computation.  

The traditional theory of reversible computing, ever since Landauer's 
original paper, has typically neglected to emphasize this fact, 
leading to what is arguably an overemphasis on the more restrictive 
model of unconditionally-reversible computing that is invoked 
throughout the majority of the reversible computing literature.  
By exploring the more general, conditional model, we can design 
simpler hardware mechanisms for reversible computing than can be 
modeled within the traditional framework, as we will show in section 4.
Therefore, arguably the general model deserves more intensive 
attention and study than it has, to date, received.

In the following two subsections, we first formally re-develop the 
traditional theoretical foundations of reversible computing, and then 
show how to extend those foundations to support the generalized model.

\subsection{Traditional Theory of Unconditional Reversibility}
\label{sec:tradthy}

Here, we redevelop the foundations of the traditional theory of 
unconditionally logically-reversible operations, using a language that 
we can subsequently build upon to develop the generalized theory.  We 
begin by defining some basic concepts of computational devices and 
operations, explain what it means for computational operations to be 
deterministic and reversible, define what we mean by a statistical 
operating context, and what it means to say that a deterministic 
computational operation is unconditionally reversible, which is the 
traditional notion of logical reversibility.  We can then show why 
unconditional logical reversibility is indeed necessary if we wish to 
always avoid ejecting entropy from the computational to the 
non-computational state \emph{independently} of the statistical 
properties that apply within the context of a specific computation.  
That much is then sufficient, as a basic foundation for traditional 
reversible computing theory.

Then, in subsection~\ref{sec:genthy}, we will go further, and show how 
to develop the more general, context-dependent theory.

\subsubsection{Computational devices and operations.}
\label{sec:devops}

First, let us clarify what we mean by a computational device.

\begin{defn} \label{def:dev} \textbf{\emph{Devices.}}
For our purposes, a computational \emph{device} $D$ will simply be any 
physical artifact that is capable of carrying out one or more 
different \emph{computational operations} (to be defined).
\end{defn}
We generally assume that the scale of any given device is 
circumscribed, in the sense that it is associated with some 
physical and computational state information that is localizable; 
for example, this could include the states of some I/O terminals 
incident on the device, as well as internal states of the device.  
Although devices are not, strictly speaking, closed systems (since 
they will generally exchange energy and information/entropy with their 
environment), we generally assume, for our purposes, that the 
information-theoretic thermodynamics of individual device operations 
can be analyzed more or less in isolation from other external systems; 
that is the significance of saying that devices are ``localizable.''

In line with the definitions of the previous section, devices can be
considered to have physical and computational state spaces $\Sigma,C$ 
associated with their (assumed-localizable) state.  In general, the 
identity of these sets could change over time for a given device, but 
that aspect of the situation will not be particularly important to our
present analysis.

Next, we define computational operations, that is, operations that
are intended to possibly transform the computational state.  These 
can generally include computational operations that are deterministic,
nondeterministic, reversible, or irreversible; we'll clarify the 
meanings of these terms momentarily.

\begin{defn} \label{def:op} \textbf{\emph{Computational operations.}}
Given a device $D$ with an associated initial local computational 
state space $C_\mathrm{I}=\{c_{\mathrm{I}1}, ..., c_{\mathrm{I}m}\}$ 
at some point in time $t_0$, a \emph{computational operation $O$ on D} 
that is applicable at $t_0$ is specified by giving a probabilistic 
transition rule, \emph{i.e.}, a stochastic mapping from the initial 
computational state at $t_0$ to the final computational state at some 
later time $t_0 + \Delt t$ (with $\Delt t>0$) by which the operation 
will have been completed.  Let the computational state space at this 
later time be $C_\mathrm{F}=\{c_{\mathrm{F}1}, ..., c_{\mathrm{F}n}\}$.
Then, the operation $O$ is a map from $C_\mathrm{I}$ to probability 
distributions over $C_\mathrm{F}$; which is characterizable, in terms 
of random variables $c_\mathrm{I},c_\mathrm{F}$ for the initial and 
final computational states, by a conditional probabilistic transition 
rule
$$
r_i(j) = \mathrm{Pr}(c_\mathrm{F} = c_{\mathrm{F}j}\;|\; 
					  c_\mathrm{I} = c_{\mathrm{I}i}),
$$
where $i\in\{1,...,m\}$ and $j\in\{1,...,n\}$.  That is, $r_i(j)$ 
denotes the conditional probability that the final computational state 
is $c_{\mathrm{F}j}$, given that the initial computational state is 
$c_{\mathrm{I}i}$.
\end{defn}
Note that, if we specify a computational operation in this way, in 
terms of transition probabilities between computational states, this 
does not, by itself, say anything about the initial probability 
distribution over physical or computational states, except that the 
initial probability distribution over physical states must be one for
which implementing the desired transition rule $r_i(j)$ is possible.
Although we will not detail this argument here, as long as there are 
many physical states per computational state, and the detailed 
physical state is allowed to equilibrate with the thermal environment 
in between computational operations to the extent that the 
distribution over physical states within each computational state 
approaches an equilibrium distribution, this will generally be a 
requirement that is possible to satisfy. 

\subsubsection{Deterministic and reversible operations.}
\label{sec:detrev}

Now, for later reference, let us define the concepts of determinism
and (unconditional logical) reversibility of computational operations.

\begin{defn} \label{def:detop} \textbf{\emph{Deterministic and 
nondeterministic operations.}} A computational operation $O$ will be 
called \emph{deterministic} if and only if all of the probability 
distributions $r_i$ are single-valued.  In other words, for each 
possible value of the initial-state index $i\in\{1,...,m\}$, there is 
exactly one corresponding value of the final-state index $j$ such that 
$r_i(j)>0$ (and thus, for this value of $j$, it must be the case that
$r_i(j)=1$), while $r_i(k)=0$ for all other $k\neq j$.  If an 
operation $O$ is not deterministic, we call it \emph{nondeterministic}.
As a notational convenience, for a deterministic operation $O$, we can
write $O(c_\mathrm{Ii})$ to denote the $c_\mathrm{F}j$ such that 
$r_i(j)=1$, that is, treating $O$ as a simple transition function 
rather than a stochastic one.
\end{defn}

Note that this is a different sense of the word ``nondeterministic'' 
than is commonly used in computational complexity theory, when 
referring to, for example, nondeterministic Turing machines, which 
conceptually evaluate all of their possible future computational 
trajectories in parallel.  Here, when we use the word 
``nondeterministic,'' we mean it simply in the physicist's sense, to 
refer to ``randomizing'' or ``stochastic'' operations.

\begin{defn} \label{def:revop} \textbf{\emph{Reversible and 
irreversible operations.}}  A computational operation $O$ will be 
called (unconditionally logically) \emph{reversible} if and only if 
all of the probability distributions $r_i$ have non-overlapping 
nonzero ranges.  In other words, for each possible value of the 
final-state index $j\in\{1,...,n\}$, there is at most one 
corresponding value of the initial-state index $i$ such that 
$r_i(j)>0$, while $r_k(j)=0$ for all other $k\neq i$.  If an 
operation $O$ is not reversible, we call it \emph{irreversible}.
\end{defn}

Essentially, the above definition is just a statement that the 
transition relation 
$$
R=\{(i\in\{1,...,m\},\:j\in\{1,...,n\})\;\;|\;\;r_i(j)>0\}
$$
specifying which initial states have nonzero probability of 
transitioning to which final states is an injective relation.  
(Note, however, that $R$ may not be a functional relation, if the 
operation is nondeterministic.)

Now, up to this point, the notion of ``reversible'' that we have
invoked here is essentially the same concept of (what we call 
\emph{unconditional}) logical reversibility that has been used ever 
since Landauer.  And this is, indeed, the appropriate concept for 
considering the reversibility of computational operations in the 
abstract, independently of any particular statistical context in 
which they may be operating.  If we wish for a deterministic 
computational operation to avoid ejecting entropy into the 
non-computational state no matter what the initial-state distribution 
is, then it must be an unconditionally reversible operation.  This was 
already observed by Landauer.  Let us now set up some more definitions
so that we can prove this formally.  

\subsubsection{Operating contexts and entropy-ejecting operations.}
\label{sec:opcon}

First, we define what we mean by a statistical operating context:

\begin{defn} \label{def:opcon} \textbf{\emph{Operating contexts.}}
For a computational operation $O$ with an initial computational state 
space $C_\mathrm{I}$, a (statistical) \emph{operating context} for 
that operation is any probability distribution $P_\mathrm{I}$ over the 
initial computational states; for any $i\in\{1,...,m\}$, the value of 
$P_\mathrm{I}(c_{\mathrm{I}i})$ gives the probability that the initial
computational state is $c_{\mathrm{I}i}$.
\end{defn}

And now, we define the concept of an operation that may eject entropy 
from the computational state:

\begin{defn} \label{def:eeop} \textbf{\emph{Entropy-ejecting 
operations.}}  A computational operation $O$ is called (potentially) 
\emph{entropy-ejecting} if and only if there is some operating context 
$P_\mathrm{I}$ such that, when the operation $O$ is applied within 
that context, the increase $\Delt S_\mathrm{nc}$ in the 
non-computational entropy required by {\lanprinc} is greater than zero.  
If an operation $O$ is not potentially entropy-ejecting, we call it 
\emph{non-entropy-ejecting}.
\end{defn}

\subsubsection{Fundamental theorem of traditional reversible computing.}
\label{sec:tradthm}

Now, we can formally prove Landauer's original result stating that 
only operations that are logically reversible (in his sense) can 
always avoid ejecting entropy from the computational state 
(independently of the operating context).

\begin{thm} \label{thm:tradrc} \textbf{\emph{(Fundamental Theorem of 
Traditional Reversible Computing) Non-entropy-ejecting deterministic 
operations must be re\-ver\-si\-ble.}}  If a deterministic 
computational operation $O$ is non-entropy-ejecting, then it is 
reversible in the sense defined above (its transition relation is 
injective).
\begin{proof}
Suppose that $O$ is not reversible.  Then by definition it is 
possible to find a final-state index $j$ such that there at least 
two initial state indices (which we identify as $i=1,2$ without loss 
of generality) such that both $r_1(j)>0$ and $r_2(j)>0$.  Assuming 
that the operation is also deterministic, we must have that 
$r_1(j)=r_2(j)=1$.  Thus, all of the probability mass assigned to 
$c_{\mathrm{I}1}$ and $c_{\mathrm{I}2}$ by the initial-state 
probability distribution $P_\mathrm{I}$ will get mapped onto the 
same final state $c_{\mathrm{F}j}$.  Therefore, if we let the 
operating context $P_\mathrm{I}$ assign probability $1/2$ to each of 
states $c_{\mathrm{I}1}$ and $c_{\mathrm{I}2}$, then the initial 
computational entropy $H_\mathrm{I} = 1\,\mathrm{bit}$, and the final 
computational entropy $H_\mathrm{F} = 0$, and thus the entropy ejected 
is $\Delt S_\mathrm{nc} = 1\,\mathrm{bit}$.  Thus, the operation $O$ 
is potentially entropy-ejecting.  Therefore, by contraposition, if a 
deterministic computational operation is non-entropy-ejecting, then it 
must be reversible. \qed
\end{proof}
\end{thm}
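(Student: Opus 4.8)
The plan is to prove the contrapositive: assuming $O$ is deterministic but \emph{not} reversible, I will exhibit a single operating context in which the non-computational entropy is forced to increase, so that $O$ qualifies as potentially entropy-ejecting. Since Definition~\ref{def:eeop} only asks for the existence of \emph{some} bad context, one well-chosen context suffices.

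First I would unpack the negation of reversibility via Definition~\ref{def:revop}: there is a final-state index $j$ and two distinct initial-state indices, say $1$ and $2$, with $r_1(j) > 0$ and $r_2(j) > 0$. Then I would invoke determinism (Definition~\ref{def:detop}): each $r_i$ is single-valued, so $r_1(j) > 0$ forces $r_1(j) = 1$ and likewise $r_2(j) = 1$. Hence both computational inputs $c_{\mathrm{I}1}$ and $c_{\mathrm{I}2}$ are mapped with certainty onto the same output $c_{\mathrm{F}j}$.

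Next I would take the operating context $P_\mathrm{I}$ that assigns probability $1/2$ to each of $c_{\mathrm{I}1}$ and $c_{\mathrm{I}2}$ and $0$ to every other initial state; this is admissible since Definition~\ref{def:opcon} allows any probability distribution over the initial computational states. For this context $H_\mathrm{I} = 1\,\mathrm{bit}$, whereas the induced final distribution $P_\mathrm{F}$ is a point mass on $c_{\mathrm{F}j}$, so $H_\mathrm{F} = 0$. Applying Theorem~\ref{thm:genland} gives $\Delt S_\mathrm{nc} = H_\mathrm{I} - H_\mathrm{F} = 1\,\mathrm{bit} > 0$, so $O$ is entropy-ejecting. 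Contraposition then yields the theorem.

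I do not expect a genuine obstacle here; once Definitions~\ref{def:revop}, \ref{def:detop}, \ref{def:opcon}, \ref{def:eeop} and Theorem~\ref{thm:genland} are in hand, the argument is essentially immediate. The one point deserving a moment's care is the use of determinism to upgrade $r_i(j) > 0$ to $r_i(j) = 1$: without it, a nondeterministic operation with overlapping ranges could in principle redistribute probability so as to avoid any net decrease in computational entropy, so the determinism hypothesis is genuinely load-bearing. It is also worth stating explicitly that reversibility here means exactly injectivity of the transition relation $R$, so the constructed merge of $c_{\mathrm{I}1}$ and $c_{\mathrm{I}2}$ is precisely a witness to non-injectivity.
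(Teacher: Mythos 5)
Your proof is correct and follows essentially the same route as the paper's: contraposition, extracting two initial states merging into one final state, using determinism to force $r_1(j)=r_2(j)=1$, and exhibiting the uniform-on-two-states operating context to get $H_\mathrm{I}=1\,\mathrm{bit}$, $H_\mathrm{F}=0$, hence $\Delt S_\mathrm{nc}=1\,\mathrm{bit}>0$. Your explicit citation of Theorem~\ref{thm:genland} and your remark that any two nonzero probabilities would do are both consistent with the paper's own discussion following its proof.
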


Note that in this proof, we chose an operating context that assigned 
probability 1/2 to the two initial states that are being merged, to 
construct an example demonstrating that the hypothesized irreversible 
operation is entropy-ejecting.  Actually, however, any nonzero 
probabilities for these two states would have sufficed; the 
contribution to the initial computational entropy from those two 
states would then be greater than zero, which was all that was needed 
to show that the operation was entropy-ejecting.  But, if only one of 
the two initial states being merged had been assigned nonzero 
probability, the proof would not have gone through.  This is the key 
realization that sets us up to develop the more general framework of 
Generalized Reversible Computing.  To formalize that framework, we 
need some more definitions.  

\subsection{General Theory of Conditional Reversibility}
\label{sec:genthy}

To develop the generalized theory, we define a notion of a 
\emph{computation}, which fixes a specific statistical operating
context for a computational operation, and then we examine the 
detailed requirements for a given computation to be 
non-entropy-ejecting.  This leads to the concept of \emph{conditional
reversibility}, which is the most general concept of logical 
reversibility, and thus provides the appropriate foundation for the 
generalized theory.

\subsubsection{Computations and entropy-ejecting computations.}
\label{sec:comp}

First, we define our specific technical notion of a \emph{computation}, 
by which we mean a given computational operation \emph{performed within 
a specific operating context}.

\begin{defn} \label{def:comp} \textbf{\emph{Computations.}}
For our purposes, a \emph{computation} $\mathcal{C}=(O, P_\mathrm{I})$ 
performed by a device $D$ is defined by specifying \emph{both} a 
computational operation $O$ to be performed by that device, \emph{and} 
a specific operating context $P_\mathrm{I}$ under which the operation 
$O$ is to be performed.
\end{defn}

Now, for specific computations, which carry an associated statistical 
operating context, as opposed to computational operations considered
in a more generic, context-free way, we obtain a new, 
context-dependent notion of what it means for such a computation to be 
entropy-ejecting under {\lanprinc}, and an associated new 
context-dependent notion of logical reversibility (namely, conditional 
reversibility) that corresponds to it.

\begin{defn} \label{def:eecomp} \textbf{\emph{Entropy-ejecting 
computations.}}  A computation $\mathcal{C}=(O,P_\mathrm{I})$ is 
called (specifically) entropy-ejecting if and only if, when the 
operation $O$ is applied within the specific operating context 
$P_\mathrm{I}$, the increase $\Delt S_\mathrm{nc}$ in the 
non-computational entropy required by {\lanprinc} is greater than zero.  
If $\mathcal{C}$ is not specifically entropy-ejecting, we call it 
\emph{non-entropy-ejecting}.
\end{defn}

\subsubsection{Conditionally-reversible operations.}
\label{sec:condrev}

In order to characterize the set of deterministic computations that can
be non-entropy-ejecting, we need to consider a certain class of 
computational operations, which we call the conditionally-reversible 
operations:

\begin{defn} \label{def:crop} \textbf{\emph{Conditionally-reversible 
computational operations.}} A deterministic computational operation $O$ 
is called {\emph{conditionally reversible}} if and only if there is a 
non-empty subset $A\subseteq C_\mathrm{I}$ of initial computational 
states (the \emph{assumed set}) that $O$'s transition rule maps onto 
an equal-sized set $B\subseteq C_\mathrm{F}$ of final states.  (Each 
$c_{\mathrm{I}i}\in A$ maps, one to one, to a unique 
$c_{\mathrm{F}j}\in B$ where $r_i(j)=1$.)  We say that $B$ is the 
\emph{image of $A$ under $O$}.  We also say that $O$ is 
\emph{conditionally reversible under the precondition that the initial 
state is in $A$}, or that $A$ \emph{is a precondition under which $O$ 
is reversible}.
\end{defn}

It turns out that \emph{all} deterministic computational operations 
are, in fact, conditionally reversible, under some 
sufficiently-restrictive preconditions.

\begin{thm} \label{thm:detcr} \textbf{\emph{Conditional reversibility 
of all deterministic operations.}}  All deterministic computational 
operations are conditionally reversible.
\begin{proof}
Given a deterministic computational operation $O$ with initial 
computational state space $C_\mathrm{I}$, (which must be non-empty, 
since it is a partition of a non-empty set of physical states), 
consider any initial computational state $c_{\mathrm{I}i}\in 
C_\mathrm{I}$, and let $A=\{c_{\mathrm{I}i}\}$, the singleton set of 
state $c_{\mathrm{I}i}$.  Since the operation is deterministic, 
$r_i(j)=1$ for a single final-state index $j$, so then let 
$B=\{c_{\mathrm{F}j}\}$.  Both $A$ and $B$ are the same size (both 
singletons), and thus $O$ is conditionally reversible under the 
precondition that the initial state is in $A$. \qed
\end{proof}
\end{thm}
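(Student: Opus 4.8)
The plan is to exhibit, for an arbitrary deterministic operation $O$, an explicit precondition $A$ witnessing conditional reversibility, choosing the smallest candidate possible: a singleton set of initial states. First I would check the one hypothesis that needs care, namely that the initial computational state space $C_\mathrm{I}$ is non-empty. This follows because, by Definition~\ref{def:compss}, $C_\mathrm{I}$ is a partition of the underlying physical state space, and physical state spaces are non-empty (they contain $N\in\mathbb{N}$ states), so a partition of such a set has at least one block. Hence I may fix some $c_{\mathrm{I}i}\in C_\mathrm{I}$ and set $A=\{c_{\mathrm{I}i}\}$.

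Next I would invoke determinism. By Definition~\ref{def:detop}, the transition distribution $r_i$ is single-valued, so there is exactly one final-state index $j$ with $r_i(j)=1$ (and $r_i(k)=0$ for all $k\neq j$). Setting $B=\{c_{\mathrm{F}j}\}$, I then have $|A|=|B|=1$, and $O$'s transition rule maps the single element of $A$ to the single element of $B$, which is vacuously one-to-one. Thus $A$ and $B$ meet every requirement of Definition~\ref{def:crop}, so $O$ is conditionally reversible under the precondition that the initial state lies in $A$, and since $O$ was arbitrary, every deterministic operation is conditionally reversible.

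I do not expect any genuine obstacle in this argument — the substance of the theorem is precisely the observation that the definition of conditional reversibility is met, in a degenerate but perfectly legitimate way, by every deterministic operation, which is exactly what makes conditional reversibility a proper generalization of unconditional reversibility. The only step I would take pains to state explicitly, rather than leave implicit, is the non-emptiness of $C_\mathrm{I}$; everything else is immediate from the definitions of \emph{deterministic} and \emph{conditionally reversible}.
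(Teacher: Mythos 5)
Your proposal is correct and follows essentially the same route as the paper's own proof: pick any initial state, take the singleton precondition $A=\{c_{\mathrm{I}i}\}$, use determinism to get the unique final state forming the singleton image $B$, and conclude from Definition~\ref{def:crop}. The only difference is that you spell out the non-emptiness of $C_\mathrm{I}$ and the vacuous injectivity slightly more explicitly, which the paper treats in passing.
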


Note that, although that proof of Theorem~\ref{thm:detcr} invoked 
singleton sets for simplicity, any deterministic operation that has a 
larger number $K>1$ of different final computational states that are 
reachable (that is, that have initial states that transition to them) 
is reversible under at least one precondition set $A$ that is of size 
$K$.  To build such an $A$, it suffices, for each reachable final 
state, to include in $A$ any single one of the initial states that 
transitions to it.  Therefore, all deterministic operations that have 
more than one reachable final computational state are conditionally 
reversible in this nontrivial sense, and not just in the more trivial 
sense that we invoked in the proof of Theorem~\ref{thm:detcr}.

\subsubsection{Conditioned reversible operations.}
\label{sec:conditioned}

Whenever we wish to fix a \emph{specific} assumed precondition $A$ for 
the reversibility of a conditionally-reversible operation $O$, we use
the following concept:

\begin{defn} \label{def:condop} \textbf{\emph{Conditioned reversible 
computational operations.}}  Let $O$ be any conditionally-reversible 
computational operation, and let $A$ be any of the preconditions under 
which $O$ is reversible.  Then the \emph{conditioned reversible 
operation} $O_A = (O, A)$ denotes the concept of performing operation 
$O$ in the context of a requirement that precondition $A$ is satisfied.  
Furthermore, for any given computation $\mathcal{C}=(O,P_\mathrm{I})$, 
we can say that it \emph{satisfies the assumed condition for 
reversibility of $O_A$ with probability $P$} if
$$
P=\sum_{c_j\in A} P_\mathrm{I}(c_j).
$$ 
\end{defn}

\subsubsection{Fundamental theorem of generalized reversible computing.}
\label{sec:genthm}

The central result of generalized reversible computing theory 
(Theorem~\ref{thm:genrc}, below) is then that \emph{any} deterministic 
computation $\mathcal{C}=(O,P_\mathrm{I})$ will be non-entropy-ejecting 
(as a computation), and therefore, will avoid any requirement under 
{\lanprinc} to dissipate any amount of energy greater than zero to its 
thermal environment, as long as its operating context $P_\mathrm{I}$ 
assigns probability 1 to any precondition $A$ under which its 
computational operation $O$ is reversible.  

Moreover (as we see in Theorem~\ref{thm:eevan} later), even if the 
probability of satisfying some such precondition only \emph{approaches} 
1, this is sufficient for the entropy ejected (and energy required to
be dissipated) to approach 0.

\begin{thm} \label{thm:genrc} \textbf{\emph{(Fundamental Theorem of 
Generalized Reversible Computing) Any deterministic computation is 
non-entropy-ejecting if and only if at least one of its preconditions 
for reversibility is satisfied.}}  \emph{I.e.}, let $\mathcal 
C=(O,P_\mathrm{I})$ be any deterministic computation (\emph{i.e.}, any 
computation whose operation $O$ is deterministic).  Then, part (a):  
If there is some precondition $A$ under which $O$ is reversible, such 
that $A$ is satisfied with certainty in the operating context 
$P_\mathrm{I}$, or in other words, such that
$$
\sum_{c_{\mathrm{I}i}\in A} P_\mathrm{I}(c_{\mathrm{I}i}) = 1,
$$
then $\mathcal C$ is a non-entropy-ejecting computation.  And, part 
(b):  Alternatively, if no such precondition $A$ is satisfied with 
certainty, then $\mathcal C$ is entropy-ejecting.
\begin{proof}
Part (a):  All of the probability mass in the initial-state 
distribution $P_\mathrm{I}$ falls into initial computational states 
within the set $A$, which (since $A$ is a precondition for the 
reversibility of $O$) are mapped, one-to-one, onto an equal number of 
final computational states by the transition rule $r_i(j)$.  
Therefore, the final-state distribution $P_\mathrm{F}$ consists of the 
same bag of probability values as the initial-state distribution, and 
therefore its computational entropy is the same, 
$H_\mathrm{F}=H_\mathrm{I}$.  Thus, $\Delt S_\mathrm{nc} = 0$, and so 
$\mathcal C$ is non-entropy-ejecting. Part (b):  In contrast, if none 
of $O$'s preconditions for reversibility is satisfied, then this 
implies that the set $X=\{c_{\mathrm{I}i} \; | \; 
P_\mathrm{I}(c_{\mathrm{I}i}) > 0\}$ of initial states with nonzero 
probability is mapped by $O$ to a smaller set of final states (since 
otherwise $X$ would itself be a precondition for reversibility that is 
satisfied), and therefore, by the pigeonhole principle, at least two 
initial states $c_{\mathrm{I}1},c_{\mathrm{I}2}$ in $X$ must be mapped 
(and their probability mass carried) to the same final state 
$c_{\mathrm{F}j}$, and since both $c_{\mathrm{I}1}$ and 
$c_{\mathrm{I}2}$ have nonzero probability, and the function $h(p) = 
p\log  p^{-1}$ that we sum over to compute $H(\cdot)$ is subadditive, 
$c_{\mathrm{F}j}$'s contribution to the total final computational 
entropy $H_\mathrm{F}$ is less than the total that $c_{\mathrm{I}1}$ 
and $c_{\mathrm{I}2}$ contributed to $H_\mathrm{I}$ before they were 
merged; and therefore, the final computational entropy $H_\mathrm{F}$ 
must be less than the initial computational entropy $H_\mathrm{I}$, 
and thus, the entropy ejected $S_\mathrm{nc}>0$. \qed
\end{proof}
\end{thm}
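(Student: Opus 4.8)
The plan is to prove parts (a) and (b) separately, relying only on Theorem~\ref{thm:genland} (which reduces $\Delt S_\mathrm{nc}$ to $H_\mathrm{I}-H_\mathrm{F}$) together with, for part~(b), the elementary strict subadditivity of the entropy summand $h(p)=p\log p^{-1}$. For part~(a), I would first note that since $\sum_{c_{\mathrm{I}i}\in A}P_\mathrm{I}(c_{\mathrm{I}i})=1$ and probabilities are nonnegative, every initial computational state outside $A$ has probability $0$. Because $A$ is a precondition for reversibility of the deterministic operation $O$, the transition rule restricts to a bijection of $A$ onto its image $B\subseteq C_\mathrm{F}$; hence every positive-probability initial state sends its entire mass to a distinct final state in $B$, and no final state outside $B$ receives any mass (the only initial states mapping outside $B$ carry probability $0$). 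So $P_\mathrm{F}$ is just $P_\mathrm{I}$ restricted to $A$ and relabeled through the bijection $A\to B$, i.e., the same multiset of probability values as $P_\mathrm{I}$; arguing exactly as in the proof of Theorem~\ref{thm:entcons}, $H_\mathrm{F}=H_\mathrm{I}$, so $\Delt S_\mathrm{nc}=0$ and $\mathcal C$ is non-entropy-ejecting.

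For part~(b), I would argue from the support $X=\{c_{\mathrm{I}i} : P_\mathrm{I}(c_{\mathrm{I}i})>0\}$. If $O$ were injective on $X$, then $X$ would map one-to-one onto the equal-sized set $O(X)$, making $X$ itself a precondition for reversibility of $O$; and since $\sum_{c\in X}P_\mathrm{I}(c)=1$, that precondition would be satisfied with certainty, contradicting the hypothesis of part~(b). So $O$ is not injective on $X$, and by the pigeonhole principle there are distinct $c_{\mathrm{I}1},c_{\mathrm{I}2}\in X$ with $O(c_{\mathrm{I}1})=O(c_{\mathrm{I}2})=c_{\mathrm{F}j}$ for some $j$. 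Grouping initial states by their image under $O$ and writing $P_\mathrm{F}(c_{\mathrm{F}j})=\sum_{i:\,O(c_{\mathrm{I}i})=c_{\mathrm{F}j}}P_\mathrm{I}(c_{\mathrm{I}i})$, I obtain $H_\mathrm{I}-H_\mathrm{F}=\sum_j\bigl[\sum_{i:\,O(c_{\mathrm{I}i})=c_{\mathrm{F}j}}h(P_\mathrm{I}(c_{\mathrm{I}i}))-h(P_\mathrm{F}(c_{\mathrm{F}j}))\bigr]$. Every bracket is $\ge 0$ by subadditivity of $h$, and the bracket for the distinguished index $j$ is strictly positive, since its fiber contains the two strictly positive masses $P_\mathrm{I}(c_{\mathrm{I}1}),P_\mathrm{I}(c_{\mathrm{I}2})$ and $h$ is strictly subadditive on positive arguments. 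Hence $H_\mathrm{F}<H_\mathrm{I}$, so $\Delt S_\mathrm{nc}=H_\mathrm{I}-H_\mathrm{F}>0$ and $\mathcal C$ is entropy-ejecting.

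The main obstacle I expect is pinning down the strict-subadditivity step cleanly: I need the fact that $h(a)+h(b)\ge h(a+b)$ for all $a,b\ge 0$, with equality only when $a=0$ or $b=0$ (a one-line computation, $h(a)+h(b)-h(a+b)=a\log\frac{a+b}{a}+b\log\frac{a+b}{b}>0$ whenever $a,b>0$), and then to lift this from a pair to a whole fiber $\{i:\,O(c_{\mathrm{I}i})=c_{\mathrm{F}j}\}$ that may contain additional, possibly zero, masses --- which is a short induction, pairing off the two known-positive masses first and then absorbing the remaining masses by ordinary subadditivity. The rest --- identifying $X$ as a certainty-satisfied precondition, checking that the fibers over $C_\mathrm{F}$ partition $\{1,\dots,m\}$ so that the regrouping of $H_\mathrm{I}$ is valid, and the multiset argument in part~(a) --- is routine given the definitions and Theorem~\ref{thm:genland}.
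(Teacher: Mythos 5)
Your proposal is correct and follows essentially the same route as the paper's own proof: part (a) via the observation that all mass lies in $A$ and the bijection $A\to B$ preserves the multiset of probabilities, and part (b) via the contrapositive that an injective restriction to the support $X$ would make $X$ itself a certainly-satisfied precondition, followed by strict subadditivity of $h(p)=p\log p^{-1}$ on the merged pair. Your treatment is somewhat more explicit than the paper's (the fiber-by-fiber regrouping of $H_\mathrm{I}-H_\mathrm{F}$ and the strictness of the subadditivity inequality), but the underlying argument is the same.
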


The upshot of Theorem~\ref{thm:genrc} is that, in order for it to be 
possible for some device to carry out a given computation in an 
asymptotically thermodynamically reversible way (with entropy 
generated approaching zero, and energy dissipated to the environment 
approaching zero), it is {\emph{not}} necessary for the computational 
operation being performed to be one that is unconditionally logically 
reversible; rather, it is {\emph{only}} necessary (if the operation 
is deterministic) that there be some precondition for the 
reversibility of the operation that is satisfied with probability 1, 
or approaching 1, in the specific operating context in which that 
device will be performing that operation.  For this to work in general 
(with dissipation approaching 0), the device must be designed with 
implicit knowledge of not only what conditionally-reversible operation 
it should perform, but also which specific one of the preconditions 
for that operation's reversibility should be assumed to be satisfied.

The following theorem shows why it's sufficient, for asymptotic 
thermodynamic reversibility, for the probability of satisfying a 
given precondition $A$ to only \emph{approach} 1.

\begin{thm} \label{thm:eevan} \textbf{\emph{Entropy ejection vanishes 
as precondition certainty approaches unity.}} Let $O$ be any 
deterministic operation, and let $A$ be any precondition under which 
$O$ is reversible, and let $P_{\mathrm{I}1}, P_{\mathrm{I}2}, ...$ be 
any sequence of operation contexts for $O$ within which the total 
probability mass assigned to $A$ approaches 1.  Then, in the 
corresponding sequence of computations, the entropy ejected $\Delt 
S_\mathrm{nc}$ also approaches 0.
\begin{proof}
Consider, without loss of generality, any pair 
$c_{\mathrm{I}1},c_{\mathrm{I}2}$ of initial states that are both 
mapped by the operation $O$ to the same final state $c_{\mathrm{F}j}$, 
where $c_{\mathrm{I}1}\in A$, while $c_{\mathrm{I}2}\notin A$; and 
letting $p=P_{\mathrm{I}\ell}(c_{\mathrm{I}1})$ and 
$q=P_{\mathrm{I}\ell}(c_{\mathrm{I}2})$; and let $r=p/q$ be the ratio 
between the probabilities of these two initial states.  Let $\Delt s$ 
be the contribution of this state merger to the total entropy $\Delt 
S_\mathrm{nc}$ ejected to the non-computational state.  An analytical 
derivation 
based on the definitions of $H_\mathrm{I}$ and 
$H_\mathrm{F}$ then shows that the following expression for the 
convergence of $\Delt s$ is accurate to first order in $r$, as $r$ 
increases:
$$
\Delt s \rightarrow \frac{p}{r}(1 + \ln r)\,\mathrm{nat}.
$$
And, it's easy to see that the value of this expression itself 
approaches 0, almost in proportion to $q=p/r$ as $r\rightarrow\infty$ 
and $q\rightarrow 0$.  We can then consider applying this observation 
to each initial state that is not in $A$ that merges with some state 
in $A$.  Moreover, for any other states $c_{\mathrm{I}i}$ not in $A$ 
that may merge with each other but not with any state in $A$, their 
individual contributions $\Delt s$ to the total entropy ejected are 
upper-bounded by their contributions $h_i=q\log q^{-1}$ to the initial 
computational entropy, where $q$ again is their probability, and this 
$h_i\rightarrow 0$ as $q\rightarrow 0$.  It is then clear that, as 
$\ell\rightarrow\infty$, and the total probability of all of the 
states satisfying the precondition approaches 1 from below, the total 
probability of all the states violating the precondition falls to 0, 
and so does an upper bound on each of their individual probabilities 
$q$, and thus on each of their contributions to the entropy ejected, 
and thus (recalling that the state set is finite) their total 
contribution to the ejected entropy falls to 0, and the theorem holds. 
\qed
\end{proof}
\end{thm}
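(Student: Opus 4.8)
The plan is to reduce the claim to the vanishing of a binary entropy term plus a term linear in the ``failure probability'' of the precondition. First I would invoke Theorem~\ref{thm:genland} to rewrite the quantity of interest as $\Delt S_\mathrm{nc} = H_\mathrm{I} - H_\mathrm{F}$, where $H_\mathrm{I}=H(P_{\mathrm{I}\ell})$ and $H_\mathrm{F}=H(P_{\mathrm{F}\ell})$ are the computational entropies of the initial and final computational-state distributions under the context $P_{\mathrm{I}\ell}$, and $c_\mathrm{I},c_\mathrm{F}$ the associated random variables. Since $O$ is deterministic, $c_\mathrm{F}$ is a function of $c_\mathrm{I}$, so the joint variable $(c_\mathrm{I},c_\mathrm{F})$ carries exactly the same bag of probabilities as $c_\mathrm{I}$ alone; applying the chain rule for conditional entropy (exactly as in the proof of Theorem~\ref{thm:physnc}) then gives
$$
\Delt S_\mathrm{nc} = H(c_\mathrm{I}) - H(c_\mathrm{F}) = H(c_\mathrm{I}\mid c_\mathrm{F}).
$$
So it suffices to show that the conditional entropy of the initial computational state given the final one tends to $0$ along the sequence.

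Next I would introduce an auxiliary indicator random variable $E$ that is $0$ when the initial state lies in the assumed set $A$ and $1$ otherwise, and write $\varepsilon_\ell := \mathrm{Pr}(E=1) = 1 - \sum_{c\in A}P_{\mathrm{I}\ell}(c)$, which by hypothesis tends to $0$. Because adjoining a variable cannot decrease entropy, and by the chain rule again,
$$
H(c_\mathrm{I}\mid c_\mathrm{F}) \le H(c_\mathrm{I},E\mid c_\mathrm{F}) = H(E\mid c_\mathrm{F}) + H(c_\mathrm{I}\mid c_\mathrm{F},E) \le H(E) + H(c_\mathrm{I}\mid c_\mathrm{F},E).
$$
The first term is just the binary entropy $H(E)=h(\varepsilon_\ell)+h(1-\varepsilon_\ell)$ (in the notation $h(p)=p\log p^{-1}$ of the proof of Theorem~\ref{thm:genrc}), which visibly goes to $0$ as $\varepsilon_\ell\to 0$. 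For the second term I would condition on the value of $E$: when $E=0$ the initial state is in $A$, and since $O$ restricted to $A$ maps one-to-one onto $B$, the value of $c_\mathrm{F}$ together with the knowledge that $c_\mathrm{I}\in A$ pins down $c_\mathrm{I}$ uniquely, so $H(c_\mathrm{I}\mid c_\mathrm{F},E=0)=0$; when $E=1$ I use only the crude bound $H(c_\mathrm{I}\mid c_\mathrm{F},E=1)\le \log|C_\mathrm{I}|$, a fixed finite constant because the computational state space is finite (Definition~\ref{def:ss}). Weighting by the probabilities of the two cases yields $H(c_\mathrm{I}\mid c_\mathrm{F},E)\le \varepsilon_\ell\log|C_\mathrm{I}|$. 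Combining, $\Delt S_\mathrm{nc}\le h(\varepsilon_\ell)+h(1-\varepsilon_\ell) + \varepsilon_\ell\log|C_\mathrm{I}| \to 0$, which is the theorem.

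The thing to be careful about is uniformity: the bound must not depend on which particular state mergers occur, only on $\varepsilon_\ell$ and on the fixed quantity $|C_\mathrm{I}|$. The argument above achieves this automatically, since $\log|C_\mathrm{I}|$ caps every conceivable contribution from precondition-violating states at once---including those initial states outside $A$ that merge only with each other and not with any state in $A$, which are swept up by the $E=1$ branch with no separate treatment needed. I therefore expect the coarse information-theoretic estimate, not any delicate calculation, to be the whole proof.

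The one place where real work would be required is if one insisted on the sharper first-order asymptotic $\Delt s\to \tfrac{p}{r}(1+\ln r)\,\mathrm{nat}$ quoted in the statement: there the harder part is an honest Taylor expansion of $h(p)+h(q)-h(p+q)$ in the small parameter $q$ (equivalently, large $r=p/q$) for a single merger, plus an argument that the cross-terms arising from simultaneously shrinking several violating states are higher order. Since that refined formula is only offered as illustrative color and is not needed for the convergence claim, I would state it as a remark and rely on the crude bound for the actual proof.
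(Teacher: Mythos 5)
Your proof is correct, and it takes a genuinely different route from the paper's. The paper argues merger-by-merger: for a state outside $A$ that merges with a state in $A$, it derives the first-order asymptotic $\Delt s \rightarrow \frac{p}{r}(1+\ln r)$ for the contribution of that single merger, bounds the contribution of mergers entirely outside $A$ by $h_i = q\log q^{-1}$, and then sums over the finitely many violating states, using the fact that each individual probability $q$ is squeezed to zero. Your argument instead works globally: the identity $\Delt S_\mathrm{nc} = H(c_\mathrm{I}\mid c_\mathrm{F})$ (valid because determinism makes $c_\mathrm{F}$ a function of $c_\mathrm{I}$, so $H(c_\mathrm{I},c_\mathrm{F})=H(c_\mathrm{I})$) reduces the theorem to a Fano-type estimate, and the indicator variable $E$ yields the uniform bound $\Delt S_\mathrm{nc}\le h(\varepsilon_\ell)+h(1-\varepsilon_\ell)+\varepsilon_\ell\log\abs{C_\mathrm{I}}$, which depends only on the failure probability $\varepsilon_\ell$ and the fixed size of the state space, not on the pattern of mergers. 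What each approach buys: the paper's calculation gives the precise leading-order rate at which a single merger's contribution decays (which connects to the numerical example in \cite{deben}), whereas yours gives a cleaner and fully rigorous convergence proof with an explicit uniform rate, sidesteps the Taylor expansion entirely, and as a bonus the representation $\Delt S_\mathrm{nc}=H(c_\mathrm{I}\mid c_\mathrm{F})\ge 0$ makes the nonnegativity of the ejected entropy for deterministic operations manifest. Your closing remark is also accurate: the paper's first-order formula is illustrative rather than load-bearing, and your crude bound suffices for the stated claim.
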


A numerical example illustrating how the $\Delt S_\mathrm{nc}$ 
calculation comes out in a specific case where the probability of
violating the precondition for reversibility is small can be found 
in~\cite{deben}, which also discussed the more general issue that the 		
initial-state probabilities must be taken into account in order to 
properly apply {\lanprinc}.

\subsubsection{Reversals of conditioned reversible operations.}
\label{sec:reversal}

As we saw in Theorem~\ref{thm:detcr}, any deterministic computational 
operation $O$ is con\-di\-tionally-reversible with respect to any 
given one $A$ of its suitable preconditions for reversibility.  For 
any computation $\mathcal{C}=(O,P_\mathrm{I})$ that satisfies the 
conditions for reversibility of the conditioned reversible operation 
$O_A$ with certainty, we can undo the effect of that computation 
exactly by applying any conditioned reversible operation that is what 
we call a \emph{reversal} of $O_A$.  At an intuitive level, the 
reversal of a conditioned reversible operation is simply an operation 
that maps the image of the assumed set back onto the assumed set 
itself in a way that exactly inverts the original forward map.  We can 
define this more formally as follows:

\begin{defn} \label{def:reversal} \textbf{\emph{Reversals of 
conditioned reversible operations.}}  Let $O_A$ be any conditioned 
reversible operation with assumed set $A$, and let $B\subseteq C_F$ be 
the image of $A$ under $O$.  Then a {\emph{reversal of $O_A$}} is any 
conditioned reversible operation $O'_{B'}$ where $B'\supseteq B$, and 
where the image of $B'$ under $O'$ is similarly a superset of $A$, and 
where
$$
\forall c_i\in A:\; O'(O(c_i)) = c_i.
$$
That is to say, for any initial computational state $c_i$ in the 
assumed set $A$, whatever is the final state $O(c_i)$ that $O$ maps it 
to, the operation $O'$ maps that state back to the original state $c_i$.  
In other words, $O'_{B'}$ exactly undoes whatever transformation $O_A$ 
applied to the original assumed set $A$.  
\end{defn}

\subsubsection{Incorporating nondeterminism.}
\label{sec:nondet}

The above definitions and theorems can also be extended to work with 
nondeterministic computations.  In fact, adding nondeterminism to an 
operation only makes it easier to avoid ejecting entropy to 
the noncomputational state, since nondeterminism tends to increase the 
computational entropy, and thus tends to reduce the noncomputational 
entropy.  As a result, a nondeterministic operation can be 
non-entropy-ejecting (or even entropy-absorbing, \emph{i.e.}, with 
$\Delt S_\mathrm{nc} < 0$) even in computations where none of its 
preconditions for reversibility are satisfied, so long as the 
reduction in computational entropy caused by its irreversibility is 
compensated for by an equal or greater increase in computational 
entropy caused by its nondeterminism.  An example of such an operation
was given in {\cite{ismvl}}.  However, we will not take the time, in 		
the present paper, to flesh out detailed analyses of such cases.

\section{Examples of Conditioned Reversible Operations}
\label{sec:examples}

In this section, we define and illustrate a number of examples of
conditionally-reversible operations (including a specification of 
their assumed preconditions) that comprise natural primitives to
use for the composition of more complex reversible algorithms.

First, we introduce some textual and graphical notations for
describing conditioned reversible operations.

\subsection{Notations}
\label{sec:notations}

First, let the computational state space be factorizable into 
independent \emph{state variables} $x, y, z, ...$, which are in 
general $n$-ary discrete variables.  Common special cases will 
be binary variables ($n=2$).  For simplicity, we will assume for
purposes of this section that the sets of state variables into 
which the initial and final computational state spaces are 
factorized are identical, although more generally this may not 
be the case.  A convenient method for describing 
conditionally-reversible operations together with their assumed
preconditions is to use language specifying initial conditions 
on the state variables, and how those variables are transformed.

\begin{notation}
Given a computational state space $C$ that is factorizable into state 
variables $x, y, z, ...$, and given a precondition $A$ on the initial 
state defined by $$A=\{c_i\in C\;|\;P(x, y, ...)\},$$ where 
$P(x, y, ...)$ is some propositional (\emph{i.e.}, Boolean-valued) 
function of the state variables $x, y, ...$, we can denote a 
conditionally-reversible operation $O_A$ on $C$ that is reversible
under precondition $A$ using notation like:
$$
\mathtt{OpName}(x, y, ...\;|\;P(x, y, ...))
$$
which represents a conditionally-reversible operation named 
\texttt{OpName} that operates on and potentially transforms the state 
variables $x, y, ...$, and that is reversible under an assumed 
precondition $A$ consisting of the set of initial states that satisfy 
the given proposition $P(x,y,...)$.  
\end{notation}

In the above notation, the proposition $P(x,y,...)$ for the assumed 
precondition may sometimes be left implicit and omitted; however, when 
this is done, readers should keep in mind that any particular device 
intended to be capable of carrying out a given conditionally reversible 
operation in an asymptotically physically reversible way will 
nevertheless necessarily have, built into its design, some particular 
choice of an assumed precondition with respect to which its physical 
operation will in fact be asymptotically physically reversible.  

\begin{notation}
A simple, generic graphical notation for a deterministic, 
con\-di\-tion\-al\-ly-reversible operation named \texttt{OpName}, 
operating on a state space that is decomposable into three state 
variables $x,y,z$, and possibly including an assumed precondition for 
reversibility $P(x,y,z)$, is the ordinary space-time diagram 
representation shown in Fig.~\ref{fig:xyz-op}.
\begin{figure}
\centering
\includegraphics[height=1in]{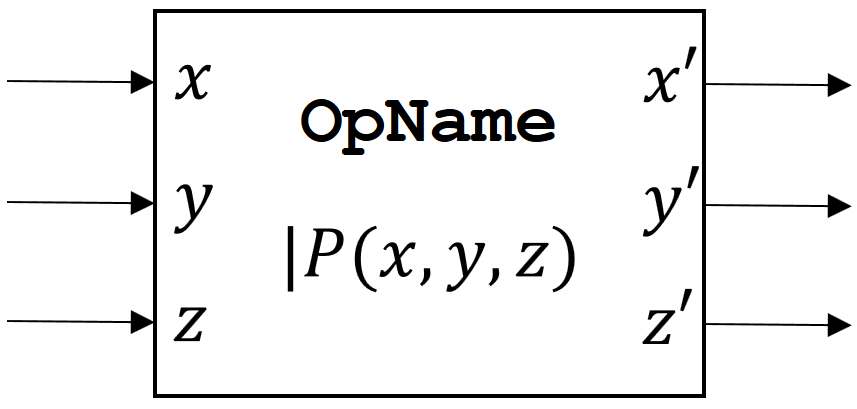}
\caption{Generic graphical notation for a deterministic, 
conditioned reversible operation $\mathtt{OpName}(x, y, 
z\,|\,P(x, y, z))$ on three state variables $x,y,z$, with an assumed 
precondition specified by the propositional function $P(x,y,z)$.}
\label{fig:xyz-op}
\end{figure}

In this representation, as in standard reversible logic networks,
time is imagined as flowing from left to right, and the horizontal 
lines represent state variables.  The primed versions $x',y',z'$ 
going outwards represent the values of the state variables in the
final computational state $c_\mathrm{F}$ after the operation is 
completed.
\end{notation}

\subsection{Reversible set and reset.}
\label{sec:setreset}

As Landauer observed, operations such as ``set to one'' and ``reset 
to zero'' on binary state spaces are logically irreversible, under his 
definition; indeed, they constitute classic examples of \emph{bit 
erasure} operations for which (assuming equiprobable inputs) an amount 
$k_\mathrm{B}\ln 2$ of entropy is ejected from the computational state.  
However, as per Theorem~\ref{thm:detcr}, these operations are in fact
conditionally reversible, under suitably-restricted preconditions.  A
suitable precondition, in this case, is one in which one of the two
initial states is required to have probability 0, in which case, the 
other state must have probability 1.  In other words, the initial 
state is known with certainty in any operating context satisfying 
such a precondition.  A known state can be transformed to any specific 
new state reversibly.  If the new state is different from the old one, 
such an operation is non-vacuous.  Thus, we have the following 
conditioned reversible operations that are useful:

\subsection{Reversible set-to-one (\texttt{rSET}).}
\label{sec:setto1}

\begin{defn}
The deterministic operation \texttt{rSET} on a binary variable $x$, 
which (to be useful) is implicitly associated with an assumed 
precondition for reversibility of $x=0$, is an operation that is
defined to transform the initial state into the final state 
$x'=1$; in other words, it performs the operation $x:=1$.  Standard 
and simplified graphical notations for this operation are 
illustrated in Figure~\ref{fig:rset}.
\begin{figure}
\centering
\includegraphics[height=0.5in]{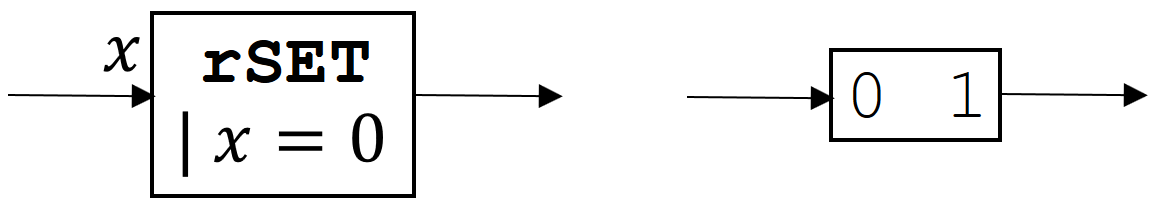}
\caption{(Left) Standard graphical notation for the conditioned
reversible operation $\mathtt{rSET}(x\;|\;x=0)$; (Right) Simplified 
symbol.}
\label{fig:rset}
\end{figure}
\end{defn}

By Theorem~\ref{thm:genrc}, the conditioned reversible operation 
$\mathtt{rSET}(x|x=0)$ is specifically non-entropy-ejecting in 
operating contexts where the designated precondition for 
reversibility is satisfied.  It can be implemented in a way that 
is asymptotically physically reversible (as the probability that 
its precondition is satisfied approaches 1) using any mechanism 
that is designed to adiabatically transform the state $x=0$ to 
the state $x=1$.

\subsection{Reversible reset-to-zero (\texttt{rCLR}).}
\label{sec:resetto0}

\begin{defn}
The deterministic operation \texttt{rCLR} on a binary variable 
$x$, which (to be useful in a reversible mode) is implicitly 
associated with an assumed precondition for reversibility of 
$x=1$, is an operation that is defined to transform the initial 
state into the final state $x'=0$; in other words, it performs 
the operation $x:=0$.  Standard and simplified 
graphical notations for this operation are illustrated in 
Figure~\ref{fig:rclr}.
\begin{figure}
\centering
\includegraphics[height=0.5in]{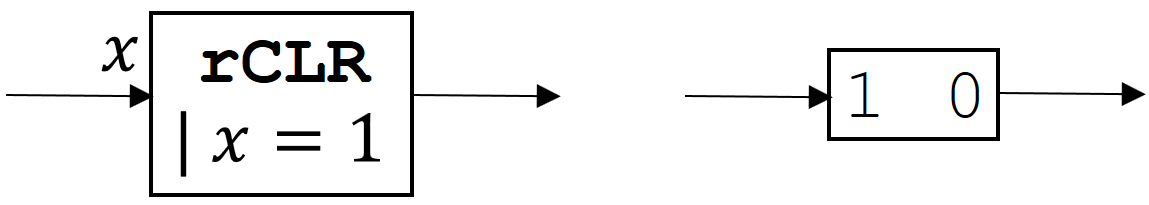}
\caption{(Left) Standard graphical notation for the 
conditionally-reversible operation $\mathtt{rCLR}(x\;|\;x=1)$; (Right) 
Simplified symbol.}
\label{fig:rclr}
\end{figure}
\end{defn}

Like with $\mathtt{rSET}$, the $\mathtt{rCLR}(x\;|\;x=1)$ operation is 
non-entropy-ejecting when its designated precondition for 
reversibility is satisfied.  It can be implemented in a way that is 
asymptotically physically reversible (as the probability that its 
precondition is satisfied approaches 1) using any mechanism that is 
designed to adiabatically transform the state $x=1$ to the state 
$x=0$.

It should be noted that, whenever the precondition for the 
reversibility of an \texttt{rSET} or \texttt{rCLR} operation
is satisfied, the outcome of the operation will be identical
to the outcome that would be obtained from a traditional,
unconditionally-reversible \texttt{cNOT} (controlled-\texttt{NOT}) 
operation, $x:=\bar{x}$).  However, \texttt{rSET} and \texttt{rCLR}
may be simpler to implement than \texttt{cNOT}; for example, one
can implement them both by simply connecting a circuit node to 
a voltage reference that then adiabatically transitions between 
the \texttt{0} and \texttt{1} logic levels in the appropriate 
direction.  Whereas, to perform an in-place \texttt{cNOT} operation
requires more steps than this in an adiabatic switching circuit.
\emph{E.g.}, we could first reversibly copy $x$'s value to a temporary 
node, then use this to control the charging/discharging of $x$ to
its new level, then decompute the copy based on the new value of 
$x$.  This illustrates how using the traditional, unconditionally 
reversible paradigm increases hardware complexity.

\subsection{Reversible set-to-$i$ (\texttt{rSET}$i$).}
\label{sec:settoi}

Similarly to \texttt{rSET}/\texttt{rCLR}, if we are given a state 
variable $x$ having any higher arity $n>2$, there are reversible 
``set to $i$'' operations for larger result values $i=2,3,...$; 
however, when $n>2$, there is a choice among multiple possible 
non-vacuous preconditions.  The general form of the conditioned 
reversible set-to-$i$ operations is thus $$\mathtt{rSET}i_j = 
\mathtt{rSET}i(x\;|\;x=j),$$ where $j\neq i$ is the assumed initial 
value of the variable $x$, and $i$ is the final value to which it is
being set.  The graphical notation of ordinary \texttt{rSET} can be 
generalized appropriately.

\subsection{Reversible copy and uncopy.}
\label{sec:copyuncopy}

A very commonly-used computational operation is to copy one state 
variable to another.  As with any other deterministic operation, such 
an operation will be conditionally reversible, under suitable 
preconditions.  An appropriate precondition for the reversibility of 
this \texttt{rCOPY} operation is any in which the initial value of the 
target variable is known, so that it can be reversibly transformed to 
the new value.  A standard reversal of a suitably-conditioned 
\texttt{rCOPY} operation, which we can call \texttt{rUnCopy}, is 
simply a conditioned reversible operation that transforms the final 
states resulting from \texttt{rCOPY} back to the corresponding initial 
states.

\begin{defn}
Let $x,y$ be any two discrete state variables both with the same arity 
(number $n$ of possible values, which without loss of generality we may 
label $0,1,...$), and let $v\in\{0,1,...,n-1\}$ be any fixed initial 
value.  Then \emph{reversible copy of $x$ onto $y=v$} or 
$$\mathtt{rCOPY}_v = \mathtt{rCOPY}(x,y\;|\;y=v)$$ is a conditioned 
reversible operation $O$ with assumed precondition $y=v$ that maps any 
initial state where $x=i$ onto the final state $x=i, y=i$.  In the 
language of ordinary pseudocode, the operation performed is simply 
$y:=x$.
\end{defn}

\begin{defn}
Given any conditioned reversible copy operation $\mathtt{rCOPY}_v$, 
there is a conditioned reversible operation which we hereby call 
\emph{reversible uncopy of $x$ from $y$ back to $v$} or 
$$
\mathtt{rUnCopy}_v = \mathtt{rUnCOPY}_v(x,y\;|\;y=x)
$$
which, assuming (as its precondition for reversibility) that initially 
$x=y$, carries out the operation $y:=v$, restoring the destination 
variable $y$ to the same initial value $v$ that was assumed by the 
\texttt{rCOPY} operation.
\end{defn}

Figure~\ref{fig:rcopy} below shows graphical notations for 
$\mathtt{rCOPY}_v$ and $\mathtt{rUnCOPY}_v$.  It is easy to see that 
corresponding $\mathtt{rCOPY}$ and $\mathtt{rUnCOPY}$ operations are 
reversals of each other, as was intended.

\begin{figure}
\centering
\includegraphics[height=1.5in]{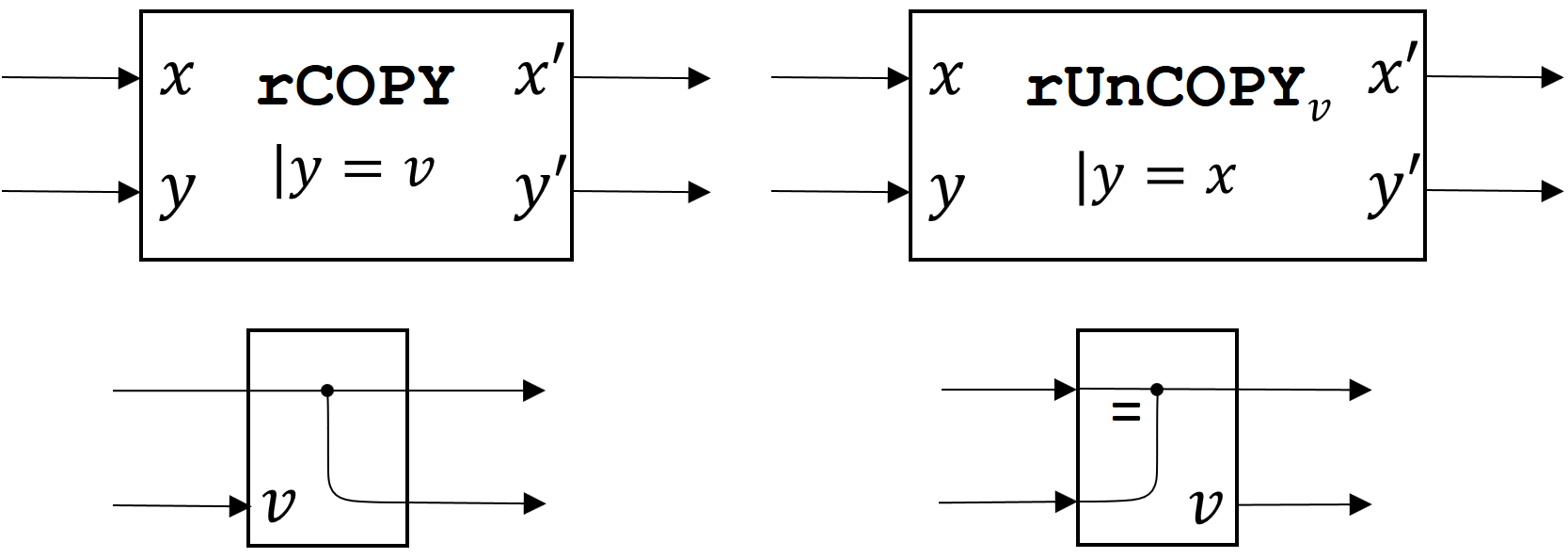}
\caption{(Left) Reversible copy of $x$ onto $y=v$, top: standard 
notation, bottom: simplified symbol; (Right) Reversible uncopy of $x$ 
from $y$ back to $v$, top: standard notation, bottom: simplified 
symbol.}
\label{fig:rcopy}
\end{figure}

\begin{thm} 
$\mathtt{rUnCOPY}_v$ is a reversal of $\mathtt{rCOPY}_v$, and 
vice-versa.
\begin{proof}
Clear by inspection. \qed
\end{proof}
\end{thm}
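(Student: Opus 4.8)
The plan is simply to unwind Definition~\ref{def:reversal} for the two concrete operations at hand. First I would fix notation: let $A=\{c\in C\;|\;y=v\}$ be the assumed set of $\mathtt{rCOPY}_v$, and recall that by its definition $\mathtt{rCOPY}_v$ sends each $c\in A$ with $x=i$ to the final state with $x=i$, $y=i$. Hence the image $B$ of $A$ under $\mathtt{rCOPY}$ is exactly the ``diagonal'' $B=\{c\;|\;x=y\}$, and since distinct values of $i$ yield distinct images, the map $A\to B$ is a bijection, confirming in passing that $\mathtt{rCOPY}_v$ really is conditionally reversible with this $A$. Similarly the assumed set of $\mathtt{rUnCOPY}_v$ is $B'=\{c\;|\;y=x\}$, on which the map $(x{=}i,y{=}i)\mapsto(x{=}i,y{=}v)$ is again injective, so $\mathtt{rUnCOPY}_v$ is a conditioned reversible operation as required.

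Next I would check the three clauses of Definition~\ref{def:reversal} for $O'=\mathtt{rUnCOPY}_v$ viewed as a candidate reversal of $O=\mathtt{rCOPY}_v$. Clause one: $B'\supseteq B$; in fact $B'=B$, so this is immediate. Clause two: the image of $B'$ under $O'$ must contain $A$; since $O'$ performs $y:=v$, it sends each $(x{=}i,y{=}i)$ to $(x{=}i,y{=}v)$, so that image is exactly $A$. Clause three: the round-trip identity $O'(O(c_i))=c_i$ for every $c_i\in A$; for $c_i$ with $x=i$, $y=v$ we have $O(c_i)=(x{=}i,y{=}i)$, and $O'$ then restores $y$ to $v$, returning $c_i$. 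This establishes that $\mathtt{rUnCOPY}_v$ is a reversal of $\mathtt{rCOPY}_v$.

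For the converse I would run the identical computation with the roles exchanged: take the assumed set $A'=\{c\;|\;y=x\}$ of $\mathtt{rUnCOPY}_v$, note its image under $\mathtt{rUnCOPY}$ is $\{c\;|\;y=v\}$, observe this coincides with the assumed set of $\mathtt{rCOPY}_v$ and that the image of that set under $\mathtt{rCOPY}$ is $A'$ again, and finally verify that for $c_i\in A'$ (so $x=i$, $y=i$) applying $\mathtt{rUnCOPY}$ then $\mathtt{rCOPY}$ returns $c_i$. Hence each of $\mathtt{rCOPY}_v$, $\mathtt{rUnCOPY}_v$ is a reversal of the other.

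There is essentially no hard step; the only thing to be careful about is the bookkeeping in Definition~\ref{def:reversal} --- keeping straight which set plays the role of the assumed set and which its image in each direction, and checking the two superset conditions ($B'\supseteq B$ and the image of $B'$ containing $A$), which here in fact hold with equality. The round-trip identities then fall out because $\mathtt{rCOPY}$ overwrites $y$ with the current value of $x$ while leaving $x$ untouched, and $\mathtt{rUnCOPY}$ overwrites $y$ with the constant $v$ while leaving $x$ untouched, so composing the two in either order restores the overwritten coordinate to its original value.
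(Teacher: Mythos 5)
Your verification is correct and is simply the explicit working-out of what the paper dismisses as ``clear by inspection'': a direct check of Definition~\ref{def:reversal} (assumed sets, images, the superset conditions, and the round-trip identities) in both directions. Nothing differs in substance; you have merely written down the inspection the paper leaves to the reader.
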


\subsection{Reversible general functions.}
\label{sec:revfunc}

It is easy to generalize $\mathtt{rCOPY}$ to more complex functions.
In general, for any function $F(x,y,...)$ of any number of variables, 
we can define a conditioned reversible operation 
$\mathtt{r}F(x,y,z\;|\;z=v)$ which computes that function, and writes 
the result to an output variable $z$ by transforming $z$ from its 
initial value to $F(x,y,...)$, which is reversible under the 
precondition that the initial value of $z$ is some known value $v$.
Its reversal $\mathtt{rUn}F_v(x,y,z\;|\;z=F(x,y))$ decomputes the 
result in the output variable $z$, restoring it back to the value $v$.  
See Fig.~\ref{fig:rf} below.

\begin{figure}
\centering
\includegraphics[height=2.0in]{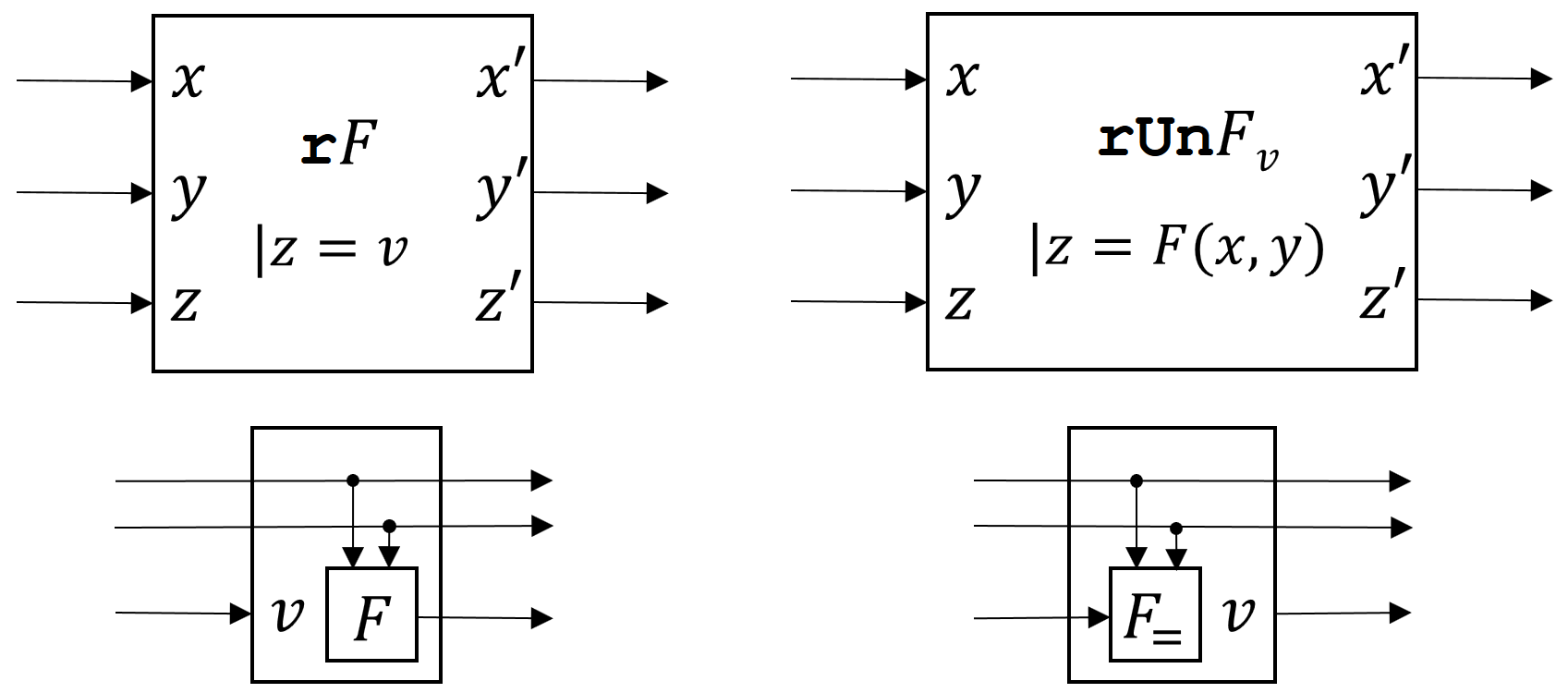}
\caption{Given any function $F(x,y)=z$ of $n$ (here, $n=2$) state 
variables (top), we can easily convert it to a pair of conditioned 
reversible operations $\mathtt{r}F(x,y,z\;|\;z=v)$ and 
$\mathtt{rUn}F_v(x,y,z\;|\;z=F(x,y))$ that are mutual reversals of 
each other that compute and decompute the value of $F$ by reversibly 
transforming the output variable $z$ from and to any predetermined 
value $v$.  The middle row shows standard notation, and the bottom row 
a simplified symbol.}
\label{fig:rf}
\end{figure}

\subsection{Reversible Boolean functions.}
\label{sec:revbool}

It's important to note here that $F$ above may indeed be any function,
including standard Boolean logic functions operating on binary 
variables, such as \texttt{AND}, \texttt{OR}, \emph{etc.}  Therefore, 
the above scheme leads us to consider conditioned reversible 
operations such as $\mathtt{rAND}_\mathtt{0}$, 
$\mathtt{rAND}_\mathtt{1}$, $\mathtt{rOR}_\mathtt{0}$, 
$\mathtt{rOR}_\mathtt{1}$; and their reversals 
$\mathtt{rUnAND}_\mathtt{0}$, $\mathtt{rUnAND}_\mathtt{1}$,
$\mathtt{rUnOR}_\mathtt{0}$, $\mathtt{rUnOR}_\mathtt{1}$;
which reversibly do and undo standard \texttt{AND} and \texttt{OR} 
logic operations with respect to output nodes that are expected to be 
a constant logic \texttt{0} or \texttt{1} initially before the 
operation is done (and also finally, after doing the reverse 
operations).

Clearly, one can compose arbitrary functions out of such primitives 
using standard logic network constructions, and later decompute the 
results using the reverse (mirror-image) circuits (after 
\texttt{rCOPY}ing the desired results), following the general 
approach pioneered by Bennett~\cite{benn}.									

One may wonder, however, what is the advantage of using operations 
such as $\mathtt{rAND}$ and $\mathtt{rUnAND}$ compared to the 
traditional unconditionally reversible operation 
$\mathtt{ccNOT}(x,y,z)$ (controlled-controlled-\texttt{NOT}, a.k.a. 
the Toffoli gate operation \cite{toff}, $z:=z\oplus xy$).  Indeed, 			
any device that implements $\mathtt{ccNOT}(x,y,z)$ in a 
physically-reversible manner could be used in place of a device
that implements the conditioned reversible operations 
$\mathtt{rAND}(x,y,z\:|\:z=0)$ and 
$\mathtt{rUnAND}_\mathtt{0}(x,y,z\:|\:z=xy)$, or one that 
implements $\mathtt{rNAND}(x,y,z\;|\;z=1)$ and 
$\mathtt{rUnNAND}_\mathtt{1}(x,y,z\;|\;z=\overline{xy})$, in cases 
where the preconditions of those operations would be satisfied.

But, the converse is not true.  In other words, there are 
asymptotically physically reversible implementations of 
$\mathtt{rAND}_\mathtt{0}$ and $\mathtt{rUnAND}_\mathtt{0}$ 
that do not also implement full Toffoli gate operations.  Therefore, 
if what one really needs to do, in one's algorithm, is simply to do 
and undo Boolean \texttt{AND} operations reversibly, then to insist 
on doing this using Toffoli operations rather than conditioned 
reversible operations such as \texttt{rAND} and \texttt{rUnAND} is 
overkill, and amounts to tying one's hands with regards to the 
implementation possibilities, leading to hardware designs that can 
be expected to be more complex than necessary.  Indeed, there are 
very simple adiabatic circuit implementations of devices capable of 
performing \texttt{rAND}/\texttt{rUnAND} and 
\texttt{rOR}/\texttt{rUnOR} operations (based on \emph{e.g.} 
series/parallel combinations of CMOS transmission gates, such as in 
Fig.~\ref{fig:rCopy-circSeq} below), whereas, adiabatic 
implementations of \texttt{ccNOT} itself are typically much less 
simple.  This illustrates our overall point that the Generalized 
Reversible Computing framework generally allows for simpler designs 
for reversible computational hardware than does the traditional 
reversible computing model based on unconditionally reversible 
operations.

\section{Modeling Reversible Hardware}
\label{sec:revhw}

A final motivation for the study of Generalized Reversible Computing 
derives from the following observation.

\begin{assertion} \label{ass:adiacr} \textbf{\emph{General 
correspondence between truly, fully adiabatic circuits and conditioned 
reversible operations.}}  Part (a): Whenever a switching circuit is 
operated deterministically in a truly, fully adiabatic way 
(\emph{i.e.}, that asymptotically approaches thermodynamic 
reversibility), transitioning among some discrete set of logic levels, 
the computation being performed by that circuit corresponds to a 
conditioned reversible operation $O_A$ whose assumed precondition $A$ 
is (asymptotically) satisfied.  Part (b): Likewise, any conditioned 
reversible operation $O_A$ can be implemented in an asymptotically 
thermodynamically reversible manner by using an appropriate switching 
circuit that is operated in a truly, fully adiabatic way, 
transitioning among some discrete set of logic levels.
\end{assertion}

Although we will not here prove Assertion~\ref{ass:adiacr} formally, 
part (a) essentially follows from our earlier observation in 
Theorem~\ref{thm:genrc} that, in deterministic computations, 
conditional reversibility is the correct statement of the 
logical-level requirement for avoiding energy dissipation under 
{\lanprinc}, and therefore it is a necessity for approaching 
thermodynamic reversibility in any deterministic computational process,
and therefore, more specifically, in the operation of adiabatic 
circuits.  

Meanwhile, part (b) follows simply from general constructions showing 
how to implement any desired conditioned reversible operation in an 
asymptotically thermodynamically reversible way using adiabatic 
switching circuits. For example, Fig.~\ref{fig:rCopy-circSeq} 
illustrates how to implement an {\texttt{rCOPY}} operation using a 
simple four-transistor CMOS circuit.  In contrast, implementing 
{\texttt{rCOPY}} by embedding it within an unconditionally-reversible 
{\texttt{cNOT}} would require including an {\texttt{XOR}} capability, 
and would require a much more complicated adiabatic circuit, whose 
operation would itself be composed from numerous more-primitive 
operations (such as adiabatic transformations of individual 
MOSFETs~\cite{SEALeR}) that are themselves only conditionally 				
reversible.

In contrast, the traditional reversible computing framework of 
unconditionally reversible operations does not exhibit any 
correspondence such as that of Assertion~\ref{ass:adiacr} to any 
natural class of asymptotically physically-reversible hardware that we 
know of.  In particular, the traditional unconditionally-reversible 
framework does not correspond to the class of truly/fully adiabatic 
switching circuits, because there are many such circuits that do not 
in fact perform unconditionally reversible operations, only 
conditionally-reversible ones.

\begin{figure}[!tb]
\centering
\includegraphics[height=1.4in]{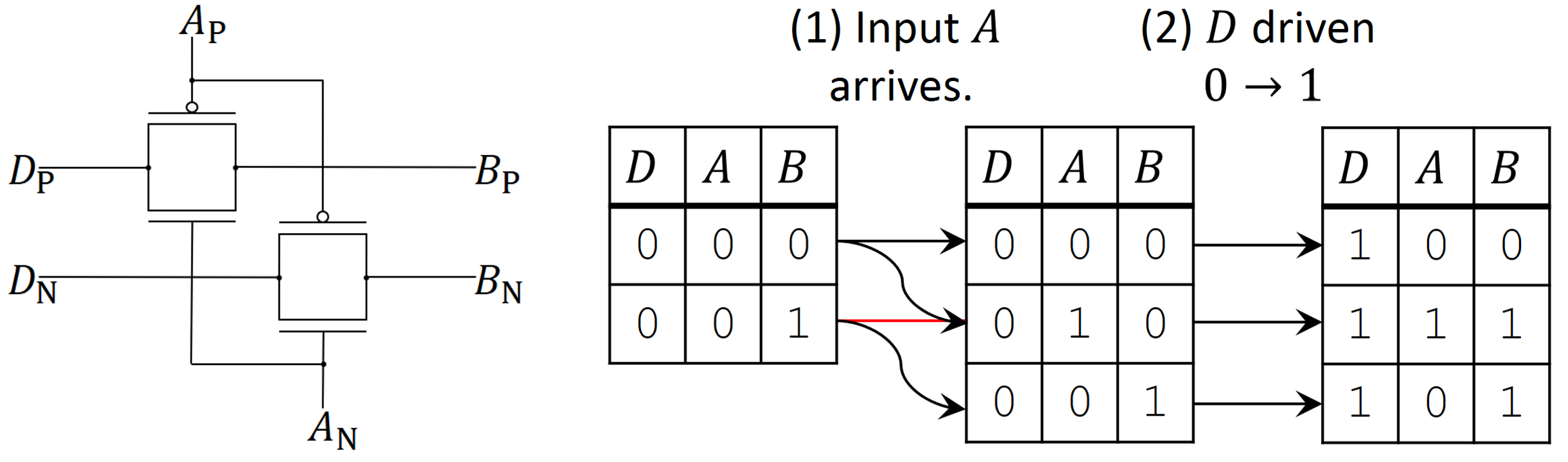}
\caption{(Left) A simple adiabatic CMOS circuit capable of carrying out 
a variant of the {\texttt{rCOPY}} operation.  Here, computational 
states are represented using dual-rail complementary voltage coding, so 
that, for example, a logical state $A=\mathtt{0}$ is represented using 
the voltage assignments $A_\mathrm{P} = V_\mathrm{H}, A_\mathrm{N} = 
V_\mathrm{L}$, where $V_\mathrm{H},V_\mathrm{L}$ are high and low 
voltage levels, respectively.  The logical state $A=\mathtt{1}$ would 
be represented using the opposite voltage assignments.  The two CMOS 
transmission gates shown will thus be turned ON (conducting) only when 
$A=\mathtt{1}$.  In this circuit, $A$ is the logic input, $B$ is the 
output, and $D$ is a driving signal.  (Right) Sequence of operation. 
Assume initially that $D=\mathtt{0}$ and $A=\mathtt{0}$. Normally we 
would also have $B=\mathtt{0}$ initially, but to illustrate the 
conditional reversibility of this circuit, we will also consider the 
case $B=\mathtt{1}$.  In step 1, some external circuit adiabatically 
transforms input $A$ from logic {\texttt{0}} to a newly-computed value 
(\texttt{0} or {\texttt{1}}) to be copied, then in step 2, the drive 
signal $D$ is unconditionally transformed adiabatically from logic 
\texttt{0} to \texttt{1}.  Note that, in the course of this operation 
sequence, if $B$ were {\texttt{1}} initially, then it would be 
dissipatively sourced to $D=\mathtt{0}$ in step 1 if $A=\mathtt{1}$.  
Thus, this particular operation sequence implements a conditioned 
reversible operation $\mathtt{rCOPY}'(A,B\,|\,\overline{AB})$; it is 
reversible as long as we don't try to copy an input value 
$A=\mathtt{1}$ onto an initial state where $B=\mathtt{1}$. The prime 
there after {\texttt{rCOPY}} is denoting the variant semantics, namely 
that in the case $\bar{A}B$, the value $A=\mathtt{0}$ is not copied to 
$B$.}
\label{fig:rCopy-circSeq}
\end{figure}

\section{Comparison to Prior Work}
\label{sec:compri}

The concept of conditional reversibility presented here is similar to, 
but distinct from, certain concepts that are already well known in the 
literature on the theory of reversible circuits and languages.

First, the concept of a reversible computation that is only 
semantically correct (for purposes of computing a desired function) 
when a certain precondition on the inputs is satisfied is one that was 
already implicit in Landauer's original paper~\cite{land}, when he 			
introduced the operation now known as the Toffoli gate, as a reversible 
operation within which Boolean {\texttt{AND}} may be embedded.  
Implicit in the description of that operation is that it only correctly 
computes {\texttt{AND}} if the program/output bit is initially 0; 
otherwise, it computes some other function (in this case, 
{\texttt{NAND}}).  This is the origin of the concept of 
{\emph{ancilla}} bits, which are required to obey certain pre- and 
post-conditions (typically, being cleared to 0) in order for reversible 
circuits to be composable and still function as intended.  The study of 
the circumstances under which such requirements may be satisfied has 
been extensively developed, {\emph{e.g.}} as in \cite{ricercar}.  			
However, any circuit composed from Toffoli gates is {\emph{still 
reversible}} even if restoration of its ancillas is violated; it may 
yield nonsensical outputs in that case, when composed together with 
other circuits, but at no point is information erased.  This 
distinguishes ancilla-preservation conditions from our preconditions 
for reversibility, which, when they are unsatisfied, necessarily yield 
actual (physical) irreversibility.

Similarly, the major historical examples of reversible high-level 
programming languages such as Janus~(\cite{lutz,yoko10}), 					
$\mathrm{\Psi}$-Lisp~\cite{baker}, the author's own R 						
language~\cite{frank}, and RFUN~(\cite{yoko11,ag13}) have 					
invoked various ``preconditions for reversibility'' in the defined 
semantics of many of their language constructs.  But again, that 
concept really has more to do with the ``correctness'' or 
``well-definedness'' of a high-level reversible program, and this 
notion is distinct from the requirements for actual physical 
reversibility during execution.  For example, the R language 
compiler {\cite{frank}} generated PISA assembly code in such a way 			
that even if high-level language requirements were violated 
({\emph{e.g.}, in the case of an {\texttt{if}} condition changing its 
truth value during the {\texttt{if}} body), the resulting assembly 
code would still execute reversibly, if nonsensically, on the 
Pendulum processor~\cite{vieri}.  											

In contrast, the notion of conditional reversibility explored in the 
present document ties directly to Landauer's principle, and to the 
possibility of the physical reversibility of the underlying hardware.  
Note, however, that it does not concern the semantic correctness of the 
computation, or lack thereof, and in general, the necessary 
preconditions for the physical reversibility and correctness of a given 
computation may be orthogonal to each other, as illustrated by the 
example in Fig.~\ref{fig:rCopy-circSeq}.  

\section{Conclusion}
\label{sec:concl}

In this paper, we have formally presented the core foundations of a 
general theoretical framework for reversible computing.  We analyzed 
the case of deterministic computational operations in detail, and 
proved that the class of deterministic computations that are not 
required to eject any entropy from the computational state under 
{\lanprinc} is larger than the set of computations composed of the 
unconditionally-reversible operations considered by traditional
reversible computing theory, because it also includes the set of 
conditionally-reversible operations whose preconditions for 
reversibility are satisfied with probability 1 (or asymptotically 
approaching 1, if we only need the entropy ejected to approach 0).  
This is, moreover, the most general possible characterization of the 
set of classical deterministic computations that can be physically 
implemented in an asymptotically thermodynamically reversible way.  

We then demonstrated some applications of the theory by illustrating 
some basic examples of conditioned reversible operations that work by 
transforming an output variable between a predetermined, known value 
and the computed result of the operation.  Such operations can be 
implemented in a very simple way using adiabatic switching circuits, 
whose computational function cannot in general be represented within 
the traditional theory of unconditionally-reversible computing.  This 
substantiates our assertion that the generalized reversible computing 
theory is deserving of significantly greater emphasis than it has so
far received.  

Some promising directions for future work include: (1) Giving further 
examples of useful conditioned reversible operations; (2) illustrating 
detailed physical implementations of devices for performing such 
operations; (3) further extending the development of the new framework 
to address the nondeterministic case, in which operations can be 
non-entropy-ejecting, or even entropy-absorbing, even when none of 
their preconditions for (logical) reversibility are satisfied; (4) 
developing further descriptive frameworks for reversible computing at 
higher levels (\emph{e.g.}, hardware description languages, programming 
languages) building on top of the fundamental conceptual foundations that 
GRC theory provides.

The further study and development of Generalized Reversible Computing 
theory, since it broadens the range of design possibilities for 
reversible computing devices in a clearly delineated, well-founded way, 
will be essential if the computing industry is going to successfully 
transition, over the coming decades, to the point where it is 
dominantly utilizing the reversible computing paradigm.  Due to the 
incontrovertible validity of {\lanprinc}, such a transition will be an 
absolute physical prerequisite in order for the energy efficiency (and 
cost efficiency) of general computing technology 
(that is, beyond the few cases that may be substantially sped up by 
quantum algorithms) to continue growing by indefinitely many orders 
of magnitude.




\begin{thebibliography}{4}

\bibitem{itrs} International Technology Roadmap for Semiconductors 2.0,
	2015 edn., 
	\url{www.semiconductors.org/main/2015_international_technology_roadmap_for_semiconductors_itrs/}. 
	Semiconductor Industry Association (2015)

\bibitem{land} Landauer, R.:  Irreversibility and Heat Generation in 
	the Computing Process.  IBM J. Res. Dev. 5(3), 183--191 (1961)

\bibitem{natcomm} L\'opez-Su\'arez, M., Neri, I. and Gammaitoni, L.:
	Sub-$k_{B}T$ micro-electromechanical irreversible logic gate.
	Nat. Comm. 7, 12068 (2016)

\bibitem{likh} Likharev, K.: Dynamics of some single flux quantum 
	devices: I.\ Parametric quantron.  IEEE Transactions on Magnetics 
	\textbf{13}(1), 242-–244 (1977). 

\bibitem{fred} Fredkin, E., Toffoli, T.: Conservative logic.  
	International Journal of Theoretical Physics \textbf{21}(3–-4), 
	219-–253 (1982) 

\bibitem{drexler} Drexler, K.E.:  Nanosystems: Molecular machinery,
	manufacturing, and computation.  John Wiley \& Sons, Inc. (1992)

\bibitem{yk}  Younis, S.G., Knight Jr., T.F.:  Practical 
	implementation of charge recovering asymptotically zero power CMOS.
	In: Proceedings of the 1993 Symposium on Research in Integrated
	Systems, pp. 234--250.  MIT Press (1993)

\bibitem{ismvl} Frank, M.P.:  Approaching the physical limits of 
	computing.  In:  35th International Symposium on Multiple-Valued 
	Logic, pp. 168--185.  IEEE Press, New York (2005)

\bibitem{FoGRC} Frank, M.P.:  Foundations of Generalized Reversible
	Computing.  In: Phillips, I., Hafizur, R. (eds.) RC 2017.  LNCS,
	vol.\ 10301, pp.\ 19--34, Springer, Cham (2017)

\bibitem{FoGRC-preprint} Frank, M.P.: Foundations of generalized 
	reversible computing.  Extended author's preprint of 
	{\cite{FoGRC}}, {\url{https://cfwebprod.sandia.gov/cfdocs/CompResearch/docs/grc-rc17-preprint2.pdf}},
	last accessed 6 June 2018.

\bibitem{lloyd} Lloyd, S.:  Computational Capacity of the Universe.  
	Phys. Rev. Lett. 88, 237901 (2002)

\bibitem{zurek} Zurek, W.H.:  Decoherence, einselection, and the
	quantum origins of the classical.  Rev. Mod. Phys. 75(3), pp. 
	715--775 (2003)

\bibitem{deben} DeBenedictis, E.P., Frank, M.P., Ganesh, N., Anderson, 
	N.G.:  A path toward ultra-low-energy computing.  In:  IEEE 
	International Conference on Rebooting Computing.  IEEE Press, New 
	York (2016)

\bibitem{benn} Bennett, C. H.:  Logical reversibility of computation.
	IBM J. Res. Dev. 17(6), 525--532 (1973)

\bibitem{toff} Toffoli, T:  Reversible computing.  In:  International 
	Colloquium on Automata, Languages, and Programming, pp. 632--644.
	Springer, Berlin (1980)

\bibitem{SEALeR} Frank, M.P.:  Towards a More General Model of
	Reversible Logic Hardware.  Invited talk presented at the
	Superconducting Electronics Approaching the Landauer Limit
	and Reversibility (SEALeR) workshop, sponsored by NSA/ARO 
	(2012)

\bibitem{ricercar} Thomsen, M.K., Kaarsgaard, R., Soeken, M.:
	Ricercar: A Language for Describing and Rewriting Reversible
	Circuits with Ancillae and Its Permutation Semantics.  In:
	Krivine, J., Stefani, J.-B. (eds.) RC 2015.  LNCS, vol.\ 9138,
	pp.\ 200-215.  Springer International Publishing (2015)

\bibitem{lutz} Lutz, C.: Janus: A time-reversible language.  Letter
	from Chris Lutz to Rolf Landauer, reproduced at
	\url{http://tetsuo.jp/ref/janus.pdf} (1986).

\bibitem{yoko10} Yokoyama, T.: Reversible Computation and Reversible
	Programming Languages.  Elec.\ Notes Theor.\ Comput.\ Sci.\ 253(6),
	71--81 (2010)

\bibitem{baker} Baker, H.G.: NREVERSAL of fortune---The thermodynamics
	of garbage collection.  In: Bekkers Y., Cohen J. (eds.) Memory
	Management.  LNCS, vol.\ 637, pp.\ 507--524.  Springer, Berlin,
	Heidelberg (1992)

\bibitem{frank} Frank, M.: Reversibility for Efficient Computing. 
	Doctoral dissertation, Massachusetts Institute of Technology,
	Dept.\ of Elec.\ Eng.\ and Comp.\ Sci.\ (1999) 

\bibitem{yoko11} Yokoyama, T., Axelsen, H.B., Gl\"uck, R.:
	Towards a Reversible Functional Language.  In: De Vos, A.,
	Wille, R. (eds.) RC 2011.  LNCS, vol.\ 7165, pp.\ 14--29.
	Springer, Berlin, Heidelberg (2012)

\bibitem{ag13} Axelsen, H.B., Gl\"uck R.: Reversible 
	Representation and Manipulation of Constructor Terms in the 
	Heap.  In: Dueck, G.W., Miller, D.M. (eds.) RC 2013.  LNCS,
	vol.\ 7948, pp.\ 96--109.  Springer, Berlin, Heidelberg (2013)

\bibitem{vieri} Vieri, C.J.: Reversible Computer Engineering
	and Architecture.  Doctoral dissertation, Massachusetts 
	Institute of Technology, Dept.\ of Elec.\ Eng.\ and 
	Comp.\ Sci.\ (1999) 

\end{thebibliography}
\end{document}